\documentclass[11pt]{article}

\usepackage[margin=1in]{geometry}
\usepackage{amsmath,amssymb,amsthm,mathtools}
\usepackage[hidelinks]{hyperref}
\usepackage{amsmath, mdframed}
\usepackage{amsfonts}
\usepackage{newtxtext, newtxmath}
\usepackage{enumitem}
\usepackage{titling}
\usepackage{epsfig}
\usepackage{hyperref}
\usepackage{url}
\usepackage{graphicx}
\usepackage{nicematrix}
\usepackage[utf8]{inputenc}
\usepackage[english]{babel}
\usepackage{CJKutf8}
\usepackage{comment}
\usepackage{bbm}
\usepackage{blindtext}
\usepackage{xcolor}
\usepackage{tikz, tikz-3dplot}
\usepackage{tikz-qtree}
\usetikzlibrary{positioning,calc}
\usepackage{etoolbox}
\usepackage{enumitem}
\usepackage{multirow}
\usepackage{tabularx}
\usepackage{makecell}
\usepackage[nocompress,sort]{cite}
\usepackage{algorithm}
\usepackage{algpseudocode}
\usetikzlibrary{decorations.markings}
\usetikzlibrary{decorations.text}
\usetikzlibrary{arrows.meta}
\hypersetup{
    colorlinks=true,       
    linkcolor=blue,        
    citecolor=magenta,         
    filecolor=magenta,     
    urlcolor=cyan,         
    linktoc=all
}

\usepackage{xcolor}

\definecolor{mapink}{HTML}{172033}
\definecolor{mapmuted}{HTML}{667085}
\definecolor{mapopen}{HTML}{2563B8}
\definecolor{mapfalse}{HTML}{D92D20}
\definecolor{maptheorem}{HTML}{000000}

\newtheorem{theorem}{Theorem}

\newtheorem{lemma}[theorem]{Lemma}
\newtheorem{corollary}[theorem]{Corollary}
\newtheorem{prop}[theorem]{Proposition}

\newtheorem{conjecture}{Conjecture}

\newcommand{\ind}{\mathbf 1}
\newcommand{\cF}{\mathcal F}
\newcommand{\cH}{\mathcal H}

\newcommand{\cI}{\mathcal I}
\newcommand{\cB}{\mathcal B}

\newcommand{\inftynorm}[1]{\left\lVert#1\right\rVert_\infty}

\newcommand{\onenorm}[1]{\left\lVert#1\right\rVert_1}
\newcommand{\rounddown}[1]{\left\lfloor#1\right\rfloor}
\newcommand{\roundup}[1]{\left\lceil#1\right\rceil}
\newcommand{\Z}{\mathbb{Z}}
\newcommand{\R}{\mathbb{R}}

\DeclareMathOperator{\spa}{span}
\DeclareMathOperator{\opt}{OPT}

\def\final{0}  
\def\iflong{\iffalse}
\ifnum\final=0  
\newcommand{\kristof}[1]{{\color{red}[{\tiny \textbf{Kristóf:}  #1}]\marginpar{\color{red}*}}}
\newcommand{\siyue}[1]{{\color{blue}[{\tiny \textbf{Siyue:}  #1}]\marginpar{\color{blue}*}}}
\newcommand{\jakub}[1]{{\color{green}[{\tiny \textbf{Jakub:}  #1}]\marginpar{\color{green}*}}}
\newcommand{\victor}[1]{{\color{teal}[{\tiny \textbf{Victor:}  #1}]\marginpar{\color{teal}*}}}
\else 
\newcommand{\kristof}[1]{}
\newcommand{\siyue}[1]{}
\newcommand{\victor}[1]{}
\newcommand{\jakub}[1]{}
\fi

\newlength{\bibitemsep}
\newlength{\bibparskip}
\let\oldthebibliography\thebibliography
\renewcommand\thebibliography[1]{%
  \oldthebibliography{#1}%
  \setlength{\parskip}{\bibitemsep}%
  \setlength{\itemsep}{\bibparskip}%
}

\makeatletter
\renewcommand{\paragraph}{%
  \@startsection{paragraph}{4}%
  {\z@}{1.6ex \@plus 1ex \@minus .2ex}{-0.5em}%
  {\normalfont\normalsize\bfseries}%
}
\makeatother

\title{Weighted Equitability and Matroid-Constrained Discrepancy}
\author{Krist\'of B\'erczi\thanks{MTA-ELTE Matroid Optimization Research Group and HUN-REN-ELTE Egerváry Research Group, Department of Operations Research, ELTE Eötvös Loránd University, and HUN-REN Alfréd Rényi Institute of Mathematics, Budapest, Hungary. Email: \texttt{kristof.berczi@ttk.elte.hu}.}
\and Siyue Liu\thanks{Carnegie Mellon University, Pittsburgh, PA, USA. Email: \texttt{siyueliu@andrew.cmu.edu}.}
\and Victor Reis\thanks{Microsoft Research, Redmond, WA, USA. Email: \texttt{victorol@microsoft.com}.}
\and Jakub Tarnawski\thanks{Microsoft Research, Redmond, WA, USA. Email: \texttt{jatarnaw@microsoft.com}.}}

\date{}

\begin{document}

\maketitle

\begin{abstract}
    We prove weighted matroid equitability. Let $M=(E,\cI)$ be a matroid whose ground set can be partitioned into $k$ bases, and assign a nonnegative weight to every element. Then $E$ has a partition into $k$ bases such that the weights of any two bases differ by at most the largest element weight.
    We present two proofs based on the localized exchange theorem of Akrami, Liu, Raj, and V\'egh. The first is existential, while the second is constructive and leads to a strongly polynomial-time algorithm.
    As applications, we obtain an additive guarantee for matroid-constrained makespan minimization for identical machines and a strongly polynomial-time algorithm for finding an EF1 allocation under a matroid constraint and identical additive valuations.
    
    We further generalize the Beck--Fiala framework in discrepancy theory to settings with matroid constraints. Given a nonnegative matrix of column sparsity $\Delta$, we show that a fractional basis can be rounded to a basis of no larger cost while preserving every row sum within additive error $2\Delta$ times the largest matrix entry. Motivated by the $2$-sparse prefix Beck--Fiala conjecture, we formulate a conjecture on prefix-constrained matroid bases and prove a discrepancy bound $O(\log n)$. Finally, we give a counterexample to the weighted carpooling conjecture, thereby also disproving a conjecture by Morell and Skutella on single-source unsplittable flows with two-sided discrepancy bounds.
    
    \medskip     \noindent\textbf{Keywords:} Weighted equitability, Degree-bounded matroid bases, Single-source unsplittable flows, Morell--Skutella conjecture
\end{abstract}


\section{Introduction}
\label{sec:intro}

Given a fractional point in the intersection of (multiple) matroid base polytopes and a system of linear equations defined by a matrix $W$, can we find a matroid basis that has small violation of the linear equations? We study closely related questions on matroid equitability, degree-bounded bases, prefix-constrained bases, and unsplittable flows. The common theme of those problems is that the unweighted case where the entries of $W$ are $\{0,1\}$ is resolved, whereas the weighted versions are seemingly much harder. We study those questions from two angles, matroid equitability and matroid-constrained discrepancy.

A basic question in matroid partitions asks how evenly a distinguished set can be distributed among disjoint bases. Fekete and Szabó~\cite{fekete2011equitable} conjectured that whenever the ground set of a matroid can be partitioned into $k$ bases, then for every subset $S$ there is such a partition in which the numbers of elements of $S$ in the bases differ by at most one. Akrami, Liu, Raj, and Végh~\cite{AkramiLiuRajVegh2026} recently proved this conjecture for all matroids and gave a polynomial-time algorithm.

We consider the weighted analogue. Given a matroid $M=(E,\mathcal{I})$ and nonnegative weights $w\in \R_{\ge 0}^E$, can the ground set be partitioned into bases whose total weights differ by at most the largest element weight? Our main result answers this question affirmatively: there is a partition into bases $B_1,\dots,B_k$ such that
\begin{equation}  
    |w(B_i)-w(B_j)|\leq\max_{e\in E}w_e \label{eq:weighted-equitability}
\end{equation}
for all $i,j\in[k]$; such a partition is called \emph{equitable}. 
\begin{theorem}[Weighted matroid equitability]\label{thm:weighted}
Let $M=(E,\cI)$ be a matroid whose ground set can be partitioned into $k$ bases, and let $w\in\R_{\ge0}^E$. Then $E$ has an equitable partition into $k$ bases $E=B_1\sqcup\cdots\sqcup B_k$ satisfying \eqref{eq:weighted-equitability}. Consequently,
\begin{equation}\label{eq:average-bound}
  \left|w(B_i)-\frac{w(E)}{k}\right|
  \le \left(1-\frac{1}{k}\right)\cdot\max_{e\in E}w_e
  \qquad\forall i\in[k].
\end{equation}
\end{theorem}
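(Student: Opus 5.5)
We will take an extremal partition and improve it with the localized exchange theorem of Akrami, Liu, Raj, and V\'egh. That theorem is what resolves the Fekete--Szab\'o conjecture, and we use it in the form: for two disjoint bases $B_1,B_2$ and any $S\subseteq B_1\cup B_2$, the union $B_1\cup B_2$ can be repartitioned into bases $B_1',B_2'$ with $\bigl||B_1'\cap S|-|B_2'\cap S|\bigr|\le 1$. We also use the local property that one can move between such partitions by \emph{serial symmetric exchanges}, each swapping a single element $x\in B_1$ with a single element $y\in B_2$.

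We first dispose of the consequence \eqref{eq:average-bound}. If all pairwise differences are at most $w_{\max}:=\max_e w_e$, then $w(B_i)-\frac{w(E)}{k}=\frac1k\sum_{j\neq i}(w(B_i)-w(B_j))$. Each of the $k-1$ terms has absolute value at most $w_{\max}$, which gives the bound $(1-\frac1k)w_{\max}$. So it suffices to prove \eqref{eq:weighted-equitability}.

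For the main claim we argue by a potential function. Among all partitions into $k$ bases, choose one minimizing $\Phi=\sum_i w(B_i)^2$; there are finitely many partitions, so a minimizer exists. Suppose some pair satisfies $w(B_1)-w(B_2)>w_{\max}$. It is enough to show that $B_1\cup B_2$ can be repartitioned into bases $B_1',B_2'$ with $|w(B_1')-w(B_2')|<w(B_1)-w(B_2)$. Since the sum $w(B_1')+w(B_2')$ is unchanged, this strictly decreases $\Phi$, a contradiction. The problem thus reduces to the case $k=2$: show that $B_1\cup B_2$ admits a partition with weight difference at most $w_{\max}$.

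For the two-basis case we aim for an intermediate-value argument. Let $D=w(B_1)-w(B_2)$. We want a sequence of base pairs $(B_1,B_2)=(A_0,C_0),(A_1,C_1),\dots,(A_m,C_m)=(B_2,B_1)$ in which consecutive pairs differ by one symmetric swap $x\leftrightarrow y$. Such a swap changes the difference $w(A)-w(C)$ by $2(w_y-w_x)$, whose absolute value is at most $2w_{\max}$. The difference starts at $D$ and ends at $-D$, so some pair in the sequence has difference in $[-w_{\max},w_{\max}]$, as required.

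The main obstacle is producing this swap sequence from $(B_1,B_2)$ to $(B_2,B_1)$. Fekete--Szab\'o type statements only guarantee the existence of a balanced partition, not a walk to one. We would first try to apply the localized exchange theorem inductively on weight thresholds. Sort the elements by weight, let $S_t$ be the set of the $t$ heaviest elements, and track partitions balanced on $S_t$. The localized version should let us pass from a partition balanced on $S_t$ to one balanced on $S_{t+1}$ by a single symmetric exchange, and this gives the walk. If that fails, a fallback is a rounding argument: balance every threshold set $S_t$ simultaneously up to $\pm1$, using that the threshold sets form a chain (a laminar family) and that the localized exchange theorem extends to chains. Writing $w=\sum_t(w_{(t)}-w_{(t+1)})\ind_{S_t}$, the chain-balancing then yields $|w(B_1')-w(B_2')|\le w_{\max}$ by telescoping, since the $\pm1$ discrepancies can be made to alternate in sign along the chain.
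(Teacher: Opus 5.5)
\textbf{There is a genuine gap: the two-basis case.} Your reduction to $k=2$ via the potential $\sum_i w(B_i)^2$ is correct, and so is your derivation of \eqref{eq:average-bound}. The intermediate-value step would also be fine if the walk existed. But the two-basis case is the whole difficulty, and your argument for it does not go through.

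\textbf{The walk is not available.} The ``local property'' that $(B_1,B_2)$ can be walked to $(B_2,B_1)$ by single symmetric swaps is not a known theorem for general matroids. It is a Gabow-type reconfiguration statement: Conjecture~\ref{conj:Gabow} would supply it, and the paper stresses that it proves the theorem without that conjecture. It also does not follow from the Akrami--Liu--Raj--V\'egh theorem. The exchangeable set $X$ that theorem produces, with $t\in X\subseteq S+t$, may contain many elements of $S$, so it is not a single swap. Your threshold induction ``by a single symmetric exchange'' therefore has no support. Even if every step existed, it would give a sequence of partitions balanced on growing sets, not a walk from $(B_1,B_2)$ to $(B_2,B_1)$.

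\textbf{Your fallback is false.} Simultaneous $\pm1$ balance along a chain cannot always be achieved. Take the graphic matroid of $K_4$. The partitions into two spanning trees are exactly the pairs of complementary Hamiltonian paths, because the complement of a star is a triangle. Every Hamiltonian path $abcd$ contains both edges $ab,cd$ of a perfect matching. Now order the edges $12,34,13,24,14,23$ with strictly decreasing weights. Balance within $\pm1$ on $S_2,S_4,S_6$ forces each tree to contain exactly one edge from each of the three perfect matchings, which is impossible. The theorem itself still holds in this example.

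\textbf{The missing idea: making a large localized exchange cost at most $w_{\max}$.} The paper never builds a walk. It proves (Theorem~\ref{thm:scalar-gap}) that consecutive values in the set of block weights differ by at most $w_{\max}$. That suffices, because a maximum-weight block and its complement (a minimum-weight block) straddle $w(E)/2$. The gap bound is proved as follows.
\begin{itemize}
\item Scale weights to $[0,1]$ and suppose some gap exceeds $1$. This splits the blocks into $\cF^+$ and $\cF^-$.
\item Take an optimal $u\in[0,1]^E$ for the LP maximizing $\min_{\cF^+}u(B)-\max_{\cF^-}u(D)$.
\item Complementary slackness with the dual distribution $\lambda$ on $\cF^+\times\cF^-$ rewrites the optimal gap as $\sum\lambda_{B,D}(|B\cap S|-|D\cap S|)$, where $S=\{e: u_e=1\}$. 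It also shows that the blocks of $\cF^+$ in the support of $\lambda$ have minimum $u$-weight.
\item Suppose such a block had $|B\cap S|>|S|/2$. The ALRV theorem gives a $(t,S)$-exchangeable $X$ with $t\notin S$. Since $u\equiv 1$ on $X-t$, the block $B\triangle X$ has $u$-weight in $[u(B)-1,u(B))$, which lies inside the forbidden gap.
\item The symmetric argument for $\cF^-$ then forces the gap to be at most $0$, a contradiction.
\end{itemize}
The paper's second, algorithmic proof gets the same effect using truncated weights $\min\{w_e,\tau\}$, which are constant on the active set.
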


The bound in \eqref{eq:weighted-equitability} is best possible, even for a rank-1 uniform with one element of positive weight and all remaining elements of weight zero. Oki and Schwarcz~\cite{oki2025generalizing} independently proved a related weighted equitability result for matroids representable over a field of characteristic zero. They showed that the ground set can be partitioned into bases $B_1,\ldots,B_k$ satisfying
\[
w(B_1)\geq w(B_2)\geq\cdots\geq w(B_k)\geq w(B_1)-\max_{e\in B_1}w_e.
\]
This conclusion also follows, for arbitrary matroids, from the bundle-dependent guarantee of our truncation algorithm proved in Section~\ref{sec:fair}.

The proof first considers block matroids, that is, matroids whose ground set is the disjoint union of two bases. A block is a basis whose complement is also a basis. After scaling the element weights to lie in $[0,1]$, we show that the difference between two consecutive weights attained by blocks is at most one. The exchange conjecture of Greene and Magnanti~\cite{greene1975some}, in the special case later posed independently by Gabow~\cite{gabow1976decomposing}, would imply this statement by providing a sequence of symmetric exchanges between a block and its complement. We prove the statement without assuming this conjecture. Our argument combines linear programming duality with the exchange theorem of Akrami, Liu, Raj, and Végh~\cite{AkramiLiuRajVegh2026}. Applying the block result to a heaviest and a lightest basis repeatedly gives an equitable decomposition.

We also give a strongly polynomial-time algorithm. For a parameter $\tau$, truncate every element weight to $w_e^\tau=\min\{w_e,\tau\}$. Starting from an arbitrary partition into bases at $\tau=0$, the algorithm continuously increases $\tau$ while maintaining $|w^\tau(B_i)-w^\tau(B_j)|\leq \tau$ for every $i,j\in[k]$. Whenever an inequality becomes tight and would be violated by further increasing $\tau$, a single exchange is used for fixing the partition. A convex potential based on the numbers of active elements in the bases decreases after every repair step and whenever the active set changes. This bounds the total number of iterations.

Weighted equitability has two direct applications. The first is matroid-constrained load balancing on identical machines. The jobs form the ground set of a matroid, and the jobs assigned to every machine must be independent. This problem belongs to the family of optimal matroid partitioning problems studied by Kawase, Kimura, Makino, and Sumita \cite{kawase2021optimal}. By adding zero-processing-time dummy jobs, any feasible partition into independent sets can be completed to a partition into bases of a larger matroid. Applying weighted equitability gives a feasible schedule on $m$ machines with a makespan at most 
\[
\opt+\left(1-\frac{1}{m}\right)p_{\max},
\]
where $p_{\max}$ is the largest processing time. In particular, this yields a $(2-\frac{1}{m})$-approximation for the corresponding min--max matroid partitioning problem. 

The second application concerns fair division under a common matroid constraint. Suppose that $k$ agents have identical additive valuations and that a feasible allocation is a partition of the goods into $k$ bases. Akrami, Liu, Raj, and V\'egh \cite{AkramiLiuRajVegh2026} proved the existence of a matroid-constrained EF1 allocation for identical tri-valued additive valuations and left the case of general identical additive valuations open. Our weighted result resolves this question. Although weighted equitability alone only gives a bound in terms of the largest weight in the entire ground set, the truncation algorithm yields the bundle-dependent inequality
\[
    \min_{i\in[k]} w(B_i)\geq w(B_j)-\max_{e\in B_j}w_e
\]
for every final bundle $B_j$, which is equivalent to EF1 under identical additive valuations. 

We next consider discrepancy rounding of matroid bases. Let $x$ be a fractional point in the matroid base polytope $P(M)$, let $W$ be a nonnegative matrix with at most $\Delta$ nonzero entries in every column, and let $c$ be a cost vector. We give a polynomial-time algorithm that finds a basis $B$ of $M$ satisfying
\[  
c(B)\leq c^\top x\qquad\text{and}\qquad\left\|W\mathbf 1_B-Wx\right\|_\infty\leq 2\Delta\cdot\max_{i,e}W_{i,e}.
\]
This may be viewed as a matroidal analogue of the Beck--Fiala theorem~\cite{beck1981integer}.
It also gives a weighted analogue of the theorem of Király, Lau, and Singh~\cite{kiraly2008degree} on degree-bounded matroid bases, extending the setting from incidence matrices to arbitrary nonnegative matrices while retaining the same column-sparsity parameter. 
Specializing the matroid to a partition matroid gives an alternative proof of classical results for the generalized assignment problem (GAP) \cite{lenstra1990approximation,shmoys1993approximation}, up to a factor of two.


GAP has a prefix-constrained version. We are given a set of machines $I$ and ordered jobs $J=\{1,\ldots, n\}$; job $j$ has processing time $d_{ij}$ on machine $i$. The goal is to assign jobs to machines such that for every machine and every prefix of the jobs, the integral load is close to the corresponding fractional load.
Bansal, Rohwedder, and Svensson~\cite{bansal2022flow} conjectured the following:
\begin{conjecture}[Prefix-constrained GAP]\label{conj:prefix_ssuf}
    Let $G=(I\cup J,E)$ be a bipartite graph where $J=[n]$. For every fractional assignment $x\in [0,1]^{E}$ and $d\in \R_{\ge 0}^{E}$, there exists an assignment $y\in\{0,1\}^{E}$ such that for some universal constant $C>0$,
    \[
    \left|\sum_{j=1}^t d_{ij}y_{ij}-\sum_{j=1}^t d_{ij}x_{ij}\right|\leq C\cdot\max_{e\in E}d_e\qquad \forall i\in I,\ t\in [n].
    \]
\end{conjecture}
We formulate the following matroidal analogue. Given a matroid $M=([n],\cI)$, weights $w\in\R_{\geq 0}^n$, and a fractional basis $x\in P(M)$, does there exist a basis $B$ such that
\[
\left|\sum_{i=1}^t w_i\mathbf 1\{i\in B\}-\sum_{i=1}^t w_ix_i\right|\leq C\cdot\max_{e\in[n]}w_e
\]
for all $t\in[n]$ for a universal constant $C$? We show that this prefix-constrained basis conjecture implies the Prefix-constrained GAP Conjecture \ref{conj:prefix_ssuf} up to a factor of two. We also consider the cost-augmented version, in which the rounded basis is additionally required to satisfy $c(B)\leq c^\top x$. Using the face-preserving rounding framework of Swamy, Traub, Koch, and Zenklusen~\cite{swamy2026unsplittable}, we show that any bound for the original conjecture yields a cost-preserving bound with twice the error.

We prove a bound $C=O(\log n)$ for prefix-constrained bases. The proof uses iterated partial coloring, where we iteratively solve an LP relaxation that finds a partial coloring that fixes a constant fraction of the variables. We also prove two special cases. For binary weights, the cost-augmented version holds with $C=1$, using the total unimodularity of a system formed by two chains of constraints. For uniform matroids and arbitrary nonnegative weights, the conjecture also holds with $C=1$, via a direct rounding algorithm that processes the elements in order while preserving the weighted prefix sums.

Finally, we consider the restricted version of Conjecture \ref{conj:prefix_ssuf} where $d_{ij}=d_j$ for every $i$ incident to $j$, which is also known as the weighted carpooling problem~\cite{LiuReis2026Weighted}. In fact, a bound where $C=1$ was conjectured in \cite{LiuReis2026Weighted}:
\begin{conjecture}[Weighted carpool]\label{conj:weighted-carpool}
    Let $G=(I\cup J,E)$ be a bipartite graph where $J=[n]$. For every fractional assignment $x\in [0,1]^{E}$ and $d\in \R_{\ge 0}^{E}$, there exists an assignment $y\in\{0,1\}^{E}$ such that
    \[
    \left|\sum_{j=1}^t d_{j}y_{ij}-\sum_{j=1}^t d_{j}x_{ij}\right|\leq \max_{j\in [n]}d_j\qquad \forall i\in I,\ t\in [n].
    \]
\end{conjecture}

This conjecture is a special case of Morell and Skutella's conjecture on single-source unsplittable flows with arc-wise upper and lower bounds \cite{morell2022single} (see Appendix B of the same paper or \cite{LiuReis2026Weighted}). We give a counterexample to Conjecture \ref{conj:weighted-carpool}, which also disproves the Morell--Skutella conjecture; we additionally give a simpler counterexample directly to the latter.

\subsection{Further related work}

We briefly review related results and techniques relevant to the main results of the paper.

\paragraph{Gabow's conjecture.} Greene and Magnanti~\cite{greene1975some} conjectured that for any two bases $B$ and $D$ and every $X\subseteq B\setminus D$, there exists $Y\subseteq D\setminus B$ such that $X$ and $Y$ can be exchanged one element at a time while preserving both bases. The special case where $X=B\setminus D$ was later posed independently by Gabow~\cite{gabow1976decomposing}.

\begin{conjecture}[Gabow]\label{conj:Gabow}
    Let $B,D$ be bases of a matroid $M$. Then, there exist orderings $b_1,\ldots,b_r$ of $B$ and $d_1,\ldots,d_r$ of $D$ such that
\[
B_i=\left(B\setminus\{b_1,\ldots,b_i\}\right)\cup\{d_1,\ldots,d_i\}
\qquad\text{and}\qquad
D_i=\left(D\setminus\{d_1,\ldots,d_i\}\right)\cup\{b_1,\ldots,b_i\}
\]
are bases for every $i\in\{0,\ldots,r\}$. 
\end{conjecture}

The connection between Gabow's conjecture and matroid equitability was already observed in earlier work~\cite{berczi2023exchange}. Indeed, an affirmative answer to Gabow's conjecture implies the equitability theorem, and the same argument also yields the weighted Theorem~\ref{thm:weighted}. Gabow's conjecture is known to hold for several classes of matroids, including matching, strongly base-orderable, graphic, sparse paving, split, and regular matroids; see~\cite{berczi2023exchange,Berczi2024,oki2025generalizing} for further cases and references. Consequently, Theorem~\ref{thm:weighted} also follows from these results for the corresponding matroid classes. We prove Theorem~\ref{thm:weighted} for arbitrary matroids without assuming Gabow's conjecture. This is especially interesting in view of Larson's recent counterexample~\cite{larson2026counterexamples} to the closely related reconfiguration version of White's conjecture, according to which any two pairs of bases with the same multiset union can be connected by symmetric exchanges.

\paragraph{Rounding fractional bases.} Kir\'aly, Lau, and Singh~\cite{kiraly2008degree} studied degree-bounded matroid bases and gave a cost-preserving iterative-relaxation algorithm with additive violation $2\Delta-1$, where $\Delta$ is the maximum number of degree constraints containing any element. Bansal and Nagarajan~\cite{bansal2017approximation} generalized the discrepancy-rounding framework by Lovett and Meka~\cite{lovett2015constructive} to matroid polytopes, proving an additive violation $\min\{O(\sqrt{\Delta}\log{n}), O(\sqrt{n\log(m/n)})\}$. 
This improves upon the bound $O(\sqrt{n\log m})$ which is a consequence of the concentration property of randomized swap rounding for matroid polytopes by Chekuri, Vondr\'ak, and Zenklusen~\cite{chekuri2010dependent}. Those results are interesting in the regime where the column sparsity $\Delta$ is large. Bansal~\cite{bansal2019generalization} further prove an additive guarantee $O(\Delta)$ for degree bounded bases with additional concentration properties.

\paragraph{Budgeted matroid intersection.} Budgeted matroid intersection is also related to weighted equitability in the case of two bases. Let $M$ be a block matroid and set $w_{\max}\coloneqq\max_{e\in E}w_e$. Since the blocks of $M$ are precisely the common bases of $M$ and its dual $M^*$, the case $k=2$ of weighted equitability is equivalent to finding a common basis $B$ such that $w(E)/2-w_{\max}/2\leq w(B)\leq w(E)/2$. Indeed, given an equitable decomposition into two bases, one may take the lighter of the two. Conversely, any common basis satisfying these inequalities, together with its complement, forms an equitable decomposition. Budgeted matroid intersection instead asks for a maximum-weight common independent set of two matroids subject to a budget constraint. Taking both the weight and the cost function to be $w$, with budget $w(E)/2$, gives a relaxation of the common basis formulation above, in which the solution is only required to be a common independent set. 
Berger, Bonifaci, Grandoni, and Sch\"afer~\cite{berger2011budgeted} gave a PTAS for this problem. Chekuri, Vondr\'ak, and Zenklusen~\cite{chekuri2011multi} further gave a PTAS for budgeted matroid intersection with $k$ budget constraints for a fixed $k$. More recently, Doron-Arad, Kulik, and Shachnai~\cite{doronArad2026eptas} gave an EPTAS. 

\paragraph{Prefix Beck--Fiala.} Given $n$ vectors $v_1,\ldots,v_n$ such that $\onenorm{v_j}\le 1$ for every $j\in [n]$, it is conjectured that there exists a signing $\varepsilon\in\{-1,1\}^n$ such that the prefix sum $
\inftynorm{\sum_{j=1}^t \varepsilon_jv_j}\le C$ for every $t\in [n]$ for some constant $C$. This is known as the \emph{prefix Beck--Fiala} conjecture, since the non-prefix version, which concerns only $t=n$, was proved by Beck and Fiala \cite{beck1981integer} with $C=2$. The best known bound for prefix Beck-Fiala is $C=O(\sqrt{\log n})$ due to Banaszczyk \cite{banaszczyk2012series}. Very recently, Aden-Ali \cite{aden2026optimal} gives a linear-time online algorithm achieving the same bound.

Bansal, Rohwedder and Svensson \cite{bansal2022flow} proved that Conjecture \ref{conj:prefix_ssuf} is equivalent to the special case of \emph{$2$-sparse} prefix Beck--Fiala, where every vector $v_j$ has at most $2$ nonzero entries, up to a constant factor. 
The main motivation for studying Conjecture \ref{conj:prefix_ssuf} is that it implies a constant-factor approximation for \emph{max flow-time scheduling} for unrelated machines: each job has a release date $r_j$ and a completion time $c_j$; the goal is to find an assignment that minimizes $\max_j c_j-r_j$. The current best approximation ratio is an $O(\sqrt{\log n})$-approximation algorithm combining results of \cite{bansal2022flow} and \cite{aden2026optimal}.

\paragraph{Single-source unsplittable flows.} 

Let $D=(V,A)$ be a directed acyclic graph with source $s \in V$, sink terminals $t_1, \dots, t_n \in V$ and associated demands $d \in \R^n_{> 0}$. A flow $x \in \R^{A}_{\ge 0}$ satisfies the demands if $x(\delta^-(t_j))=d_j$ for every $j\in [n]$, and $x(\delta^-(v))=x(\delta^+(v))$ for every $v\in V\setminus \{s,t_1,\dots,t_n\}$. We say a flow $y \in \R^{A}_{\ge 0}$ is \emph{unsplittable} if for each $j \in [n]$ there is a single path $P_j$ from $s$ to $t_j$ that carries $d_j$ units of flow, so that $y_a=\sum_{j: a\in P_j} d_j$ for every $a\in A$. Morell and Skutella \cite{morell2022single} conjectured the following:

\begin{conjecture}[Morell, Skutella]\label{conj:ssuf}
   For every flow $x \in \R^{A}_{\ge 0}$ satisfying the demands, there is an unsplittable flow $y \in \R^{A}_{\ge 0}$ such that $|x_a-y_a|\le \max_{j\in [n]} d_j$ for all $a \in A$.
\end{conjecture}
Dinitz, Garg and Goemans~\cite{DinitzGargGoemans1999Combinatorica} initiated the line of work on single-source unsplittable flows, and proved the existence of an unsplittable flow satisfying the upper bound $y_a\le x_a+\max_{j \in [n]} d_j$ for all $a \in A$. Goemans also conjectured that given costs $c\in \R_{\ge 0}^A$, there exists an unsplittable flow satisfying the upper bound and $c^\top y\le c^\top x$, which was disproved very recently by Rybin, assisted by AI \cite{Rybin2026DGGCounterexample}.
Conjecture~\ref{conj:ssuf} was proved for the special cases where the demands have the property that one divides another, i.e., $d_1\mid d_2\mid \ldots \mid d_n$ \cite{morell2022single} and for special digraphs such as acyclic planar digraphs~\cite{TraubVargasKochZenklusen2024SODA}; see also~\cite{AlmoghrabiSkutellaWarode2025IPCO} for stronger guarantees for series-parallel digraphs. Liu and Reis \cite{LiuReis2026Weighted} proved a special case of the weighted carpool Conjecture \ref{conj:weighted-carpool}, where the bipartite graph $G$ is complete. They also conjectured that, in this case, the more general prefix-constrained GAP Conjecture \ref{conj:prefix_ssuf} holds for $C=1$. This was again disproved very recently by Muffatti \cite{Muffatti2026WeightedRoundingCounterexamples} with the help of AI. See Figure~\ref{fig:conjecture-implication-map} for the relationship of conjectures and theorems considered in this paper.

\begin{figure*}[t]
\centering
\resizebox{\textwidth}{!}{%
\begin{tikzpicture}[x=1cm,y=1cm]

\tikzset{
  mapstatement/.style={
    rectangle,
    rounded corners=1mm,
    fill=white,
    line width=1.15pt,
    text=mapink,
    align=center,
    text width=5.5cm,
    inner xsep=1mm,
    inner ysep=2.2mm,
    font=\large
  },
  mapopenbox/.style={
    mapstatement,
    draw=mapopen
  },
  mapfalsebox/.style={
    mapstatement,
    draw=mapfalse
  },
  maptheorembox/.style={
    mapstatement,
    draw=maptheorem
  },
  mapimplies/.style={
    draw=mapink,
    line width=1pt,
    double=white,
    double distance=1.3pt,
    -{Implies[length=5mm,width=5mm]},
    shorten >=1mm,
    shorten <=1mm
  },
  mapfactor/.style={
    fill=white,
    inner xsep=1.5mm,
    inner ysep=0.6mm,
    text=mapink,
    font=\sffamily\bfseries
  },
  mapcolumn/.style={
    font=\sffamily\Large\bfseries,
    text=mapink,
    align=center
  },
  maprow/.style={
    font=\sffamily\Large\bfseries,
    text=mapink,
    align=left,
    text width=4.5cm
  }
}

\newcommand{\conjmeta}[1]{%
  {\sffamily#1}%
}


\node[mapcolumn] at (-6.60,4.45)
  {\underline{Non-prefix}};

\node[mapcolumn] at (1.55,4.45)
  {\underline{Prefix}};

\node[maprow,anchor=west] at (-14.70,2.65)
  {\underline{Partition matroids}};

\node[maprow,anchor=west] at (-14.70,0.05)
  {\underline{General matroids}};

\node[maprow,anchor=west] at (-14.70,-2.75) {
  \underline{Intersection of a}\\[-0.5mm]
  \underline{matroid and its dual}
};

\node[maptheorembox] (lst) at (-6.60,2.65) {
  {\bfseries GAP \cite{lenstra1990approximation,shmoys1993approximation}}
  \par
  \conjmeta{
    (Corollary \ref{cor:LST-2})
  }
};

\node[mapopenbox] (c1) at (1.55,2.65) {
  {\bfseries Prefix-constrained GAP \cite{bansal2022flow}}\par
  \conjmeta{
    (Conjecture \ref{conj:prefix_ssuf})
  }
};

\node[mapfalsebox] (c2) at (9.70,2.65) {
  {\bfseries Weighted carpool \cite{LiuReis2026Weighted}}
  \par
  \conjmeta{
    (Conjecture \ref{conj:weighted-carpool})\\
    {\color{red}False} (Theorem \ref{thm:counterexample-carpool})
  }
};

\node[mapfalsebox] (c4) at (9.70,5.05) {
  {\bfseries Morell--Skutella \cite{morell2022single}}\par
  \conjmeta{
    (Conjecture \ref{conj:ssuf})\\
    {\color{red}False} (Theorem \ref{thm:counterexample-MS})
  }
};

\node[maptheorembox] (degree) at (-6.60,0.05) {
  {\bfseries Weighted degree-bounded bases}\par
  \conjmeta{
    (Theorem \ref{thm:weighted_degree_bounded_basis})
  }
};

\node[mapopenbox] (c5) at (1.55,0.05) {
  {\bfseries Prefix-constrained bases}\par
  \conjmeta{
    (Conjecture \ref{conj:matroid_chain})
  }
};

\node[mapopenbox] (c6) at (9.70,0.05) {
  {\bfseries Cost-augmented prefix bases}\par
  \conjmeta{
    (Conjecture \ref{conj:matroid_chain-cost})
  }
};

\node[maptheorembox] (weighted) at (-6.60,-2.75) {
  {\bfseries Weighted equitability}\par
  \conjmeta{
    (Theorem \ref{thm:weighted})
  }
};


\draw[mapimplies]
  (c1.west) --
  node[mapfactor,above=1pt] {$\times~O(1)$}
  (lst.east);

\draw[mapimplies]
  (c1.east) --
  node[mapfactor,above=1pt] {$\times~O(1)$}
  (c2.west);

\draw[mapimplies]
  (c4.south) -- (c2.north);


\draw[mapimplies]
  (degree.north) --
  node[mapfactor,right=1pt] {$\times~2$}
  (lst.south);



\draw[mapimplies]
  (c5.north) --
  node[mapfactor,right=1pt] {$\times~2$}
  (c1.south);

\draw[mapimplies]
  (c5.east) --
  node[mapfactor,above=3pt] {$\times~2$}
  (c6.west);

\end{tikzpicture}%
}
\caption{Relationship between the theorems and conjectures. The ones in the black boxes are proved; the ones in the red boxes are disproved; the ones in the blue boxes are open. The arrow $\Rightarrow$ means implication up to a factor written on the arrow.}
\label{fig:conjecture-implication-map}
\end{figure*}

\subsection{Notation and preliminaries}
\label{sec:prelim}

For a positive integer $k$, let $[k]\coloneqq\{1,\ldots,k\}$. For a vector $x\in\mathbb{R}^E$ and a set $A\subseteq E$, we write $x(A)\coloneqq\sum_{e\in A}x_e$, and denote the \emph{characteristic vector} of $A$ by $\ind_A$. For a set $A$ and an element $e$, we write $A-e\coloneqq A\setminus\{e\}$ and $A+e\coloneqq A\cup\{e\}$. Denote by $A\triangle B$ for symmetric difference of sets $A$ and $B$. For nonnegative weights, we adopt the convention that $\max_{e\in\emptyset}w_e=0$. In algorithmic definitions, the minimum of an empty set is defined to be $+\infty$.

Let $M=(E,\cI)$ be a matroid. We denote its rank function by $r_M$ and its family of bases by $\cB(M)$. For $A\subseteq E$, we denote the \emph{restriction} of $M$ to $A$ by $M|_A$, and \emph{deletion} and \emph{contraction} by $M\backslash A$ and $M/A$, respectively. The \emph{dual} matroid is denoted by $M^*$. A matroid $M=(E,\cI)$ is a \emph{block matroid} if its ground set can be partitioned into two bases. A set $B\subseteq E$ is a \emph{block} if both $B$ and $E\setminus B$ are bases of $M$. Equivalently, $B$ is a common basis of $M$ and its dual $M^*$.

For matroids on disjoint ground sets, their \emph{direct sum} is denoted by $M_1\oplus M_2$. The \emph{uniform matroid} of rank $r$ on $n$ elements is denoted by $U_{r,n}$. For a nonnegative integer $k$, the rank-$k$ \emph{truncation} of $M$ is denoted by $M_k$ and has rank function
\[
r_{M_k}(A)=\min\{k,r_M(A)\}
\qquad\forall A\subseteq E.
\]

We work in the independence-oracle model. Our existence results allow arbitrary real weights. In algorithmic statements, the weights are assumed to be rational. Since the algorithms and their guarantees are invariant under positive scaling of the weights, we may clear denominators and describe the algorithms for integral weights. Running times count arithmetic operations, comparisons, and independence-oracle calls.

\medskip

\paragraph{Organization.} 

Section~\ref{sec:weighted} proves weighted matroid equitability, and Section~\ref{sec:algorithm} gives the strongly polynomial-time algorithm. Section~\ref{sec:applications} presents the scheduling and fair-division applications. Section~\ref{sec:degree_bounded} proves the weighted degree-bounded basis theorem and discusses its connection to generalized assignment. Section~\ref{sec:prefix} introduces prefix-constrained bases, proves an $O(\log n)$ bound, and establishes the special cases described above. Section~\ref{sec:unsplittable} gives the counterexamples for weighted carpooling and single-source unsplittable flows.

\section{Weighted equitability}
\label{sec:weighted}



We prove the weighted equitability Theorem~\ref{thm:weighted} in this section through a more precise statement for block matroids. We show that if the element weights lie in $[0,1]$, the difference between two consecutive block weights is at most $1$. Then, applying this statement to a heaviest and a lightest basis yields the theorem.
\begin{theorem}\label{thm:scalar-gap}
Let $M=(E,\cI)$ be a block matroid, let $\cF$ be its family of blocks, and let $w\in[0,1]^E$. If $q_1<\cdots<q_m$
are the distinct values in $\{w(B)\colon B\in\cF\}$, then
\[
  q_{s+1}-q_s\le1
  \qquad\forall s\in[m-1].
\]
\end{theorem}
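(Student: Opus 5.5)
We will argue by contradiction. Assume that for some $s$ we have $q_{s+1}-q_s>1$. Call a block \emph{light} if its weight is at most $q_s$ and \emph{heavy} if its weight is at least $q_{s+1}$; by assumption every block is one or the other. The goal is to produce a light block $B$ and a heavy block $D$ that differ by a single symmetric exchange. Concretely, we want $D=B-b+d$ with $b\in B\setminus D$ and $d\in D\setminus B$. Then $w(D)-w(B)=w_d-w_b\le1$ because $w\in[0,1]^E$, which contradicts $w(D)-w(B)\ge q_{s+1}-q_s>1$.

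\textbf{Step 1: a polyhedral reduction.} The blocks are exactly the common bases of $M$ and $M^*$. So by Edmonds' matroid intersection theorem, their convex hull is $P=P(M)\cap P(M^*)$, which is integral. Fix $\theta\in(q_s,q_{s+1})$; we may take $\theta=w(E)/2$ when it lies in the gap, since $\tfrac12\ind_E\in P$. Every vertex of the slice $P\cap\{x:w^\top x=\theta\}$ lies on an edge of $P$ joining a light block $B$ to a heavy block $D$. Among all such adjacent pairs, choose one minimizing $|B\triangle D|$, or alternatively one optimizing a secondary linear objective.

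LP duality would then supply certificates. For the light/heavy pair on the optimal edge, the tight sets of $M$ and of $M^*$ form chains (after uncrossing) that are tight for both $\ind_B$ and $\ind_D$. The plan is to contract and delete along these chains, which reduces to a block minor in which $B$ and $D$ are the only relevant blocks on $B\triangle D$. Equivalently, $B\triangle D$ forms a single alternating cycle in the exchange graph, so no proper "sub-exchange" of $B$ toward $D$ yields another block.

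\textbf{Step 2: apply the localized exchange theorem.} On the reduced instance, we invoke the exchange theorem of Akrami, Liu, Raj, and V\'egh with $B$ as the block and $S=B\triangle D$ (or $S=D\setminus B$) as the distinguished set. We expect this to yield a block $B'=B-b+d$ with $b\in B\setminus D$ and $d\in D\setminus B$, that is, a single exchange localized inside $B\triangle D$. Such a $B'$ is again a block, so it is light or heavy, and in either case we get a contradiction:
\begin{itemize}
\item If $B'$ is heavy, then $|w(B')-w(B)|\le1$ directly contradicts the gap.
\item If $B'$ is light, then $B'$ and $D$ are a light/heavy pair with $|B'\triangle D|<|B\triangle D|$, contradicting the minimal choice (in the reduced minor it would contradict adjacency).
\end{itemize}
An induction on $|B\triangle D|$ packages this cleanly. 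The base case is $|B\triangle D|=2$, where the two blocks differ by one exchange and the weight difference is at most $1$.

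\textbf{Main obstacle.} The difficulty lies in Step 2: guaranteeing that the exchange supplied by the theorem genuinely moves toward $D$, i.e.\ that $b\in B\setminus D$ and $d\in D\setminus B$. Unlocalized symmetric exchanges could leave $B\triangle D$, and a full monotone exchange sequence from $B$ to $D$ is exactly Gabow's Conjecture~\ref{conj:Gabow}, which we cannot assume. This is why LP duality is needed. It lets us restrict to the minor on $B\triangle D$, where the tight-set structure forces every block exchange to stay inside $B\triangle D$. My first attempt would be to choose $\theta$ and the secondary objective so that complementary slackness pins the relevant tight chains, and then to check that the hypotheses of the localized exchange theorem hold in the resulting minor.
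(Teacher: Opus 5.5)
\textbf{There is a genuine gap: Step~2 is the entire proof, and the tool you cite cannot supply it.}

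\textbf{Theorem~\ref{thm:unweighted-exchange} does not give the swap you need.} Applied to the disjoint bases $B$ and $E\setminus B$, it requires $|B\cap S|>|(E\setminus B)\cap S|$.
\begin{itemize}
\item For $S=B\triangle D$, both sides equal $|B\setminus D|=|D\setminus B|$, so the theorem does not apply.
\item For $S=B\setminus D$, or $S=D\setminus B$ after swapping the roles of $B$ and $E\setminus B$, the set $S$ lies inside one of the two bases. The output is then $X=\{t,d\}$, which is just the ordinary symmetric exchange axiom: one swapped element lies in $B\triangle D$, and the other is uncontrolled.
\end{itemize}
In general the theorem returns a \emph{set} exchange with $t\in X\subseteq S+t$. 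Its only guarantee is $X-t\subseteq S$.

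\textbf{What Step~2 would have to prove is Gabow's conjecture.} Nothing in Step~2 uses the weights. So it would have to show that any two distinct blocks $B,D$ admit a block $B-b+d$ with $b\in B\setminus D$ and $d\in D\setminus B$. Iterating this from $B$ toward $E\setminus B$ produces exactly the sequence in Conjecture~\ref{conj:Gabow}, which is open.

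\textbf{Step~1 does not reduce the difficulty.}
\begin{itemize}
\item Edges of $P(M)\cap P(M^*)$ can join blocks with $|B\triangle D|>2$. Otherwise connectivity of the polytope's graph would connect all blocks by symmetric exchanges. After contracting common elements and restricting to the union, that is the reconfiguration form of White's conjecture which Larson refuted.
\item The minor you describe is not a block matroid whose blocks are the blocks of $M$ lying between $B$ and $D$. With $C=E\setminus(B\cup D)$, those blocks are the common bases of $M/(B\cap D)\setminus C$ and of the dual of $M/C\setminus(B\cap D)$. These are different matroids on $B\triangle D$, so the exchange theorem no longer applies there.
\item In your second bullet, $B'$ and $D$ need not be adjacent on $P$, so minimality over adjacent pairs is not contradicted.
\end{itemize}

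\textbf{The missing idea} is to give up on single combinatorial swaps and instead change the weights, so that the set exchange of Theorem~\ref{thm:unweighted-exchange} behaves like one swap.
\begin{itemize}
\item The paper takes $u\in[0,1]^E$ maximizing $\gamma=\min_{B\ \mathrm{heavy}}u(B)-\max_{D\ \mathrm{light}}u(D)$. This is an LP, and $w$ certifies $\gamma>1$. Set $S=\{e\colon u_e=1\}$, and write $b=\min_{\mathrm{heavy}}u(B)$ and $d=\max_{\mathrm{light}}u(D)$.
\item Let $B$ be a heavy block with $u(B)=b$ and $|B\cap S|>|S|/2$. Theorem~\ref{thm:unweighted-exchange} gives $t\in(E\setminus B)\setminus S$ and a $(t,S)$-exchangeable $X$ for $B$ and $E\setminus B$. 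Since $X-t\subseteq S$ consists of elements of $u$-weight one, the block $B\triangle X$ has $u$-weight $b+u_t-1\in[b-1,b)$, which lies strictly inside the gap.
\item Hence every heavy block in the support of an optimal dual $\lambda$ satisfies $|B\cap S|\le|S|/2$; these blocks all have $u$-weight $b$ by complementary slackness. Symmetrically, every light block in the support satisfies $|D\cap S|\ge|S|/2$.
\item Let $y=\sum\lambda_{B,D}(\ind_B-\ind_D)$. Complementary slackness gives $\gamma=u^\top y=y(S)=\sum\lambda_{B,D}(|B\cap S|-|D\cap S|)\le 0$, a contradiction.
\end{itemize}
This argument never needs a light and a heavy block one swap apart, which is exactly how it avoids Gabow's conjecture.
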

Note that Gabow's Conjecture \ref{conj:Gabow} would imply Theorem \ref{thm:scalar-gap}. Indeed, let $q_s<q_{s+1}$ be consecutive block weights. If $q_s<w(E)/2$, take a block $B$ of weight $q_s$. Its complement has weight $w(E)-q_s>q_s$. Conjecture \ref{conj:Gabow} gives an exchange sequence from $B$ to $E\setminus B$, each one being a block. Consider the first block in the sequence whose weight is greater than $q_s$. Its predecessor has weight at most $q_s$, while its weight is at least $q_{s+1}$. Since the two blocks differ by a single exchange, it follows that $q_{s+1}-q_s\le 1$. If $q_s\ge w(E)/2$, the same argument applies in the opposite direction, starting from a block of weight $q_{s+1}$ and ending at its complement.

To prove Theorem~\ref{thm:scalar-gap}, we crucially use the following exchange theorem of Akrami, Liu, Raj, and V\'egh
\cite[Theorem~1.4]{AkramiLiuRajVegh2026}. Let $B$ and $D$ be two disjoint bases in the matroid $M=(E,\cI)$. For $S\subseteq B\cup D$ and $t\in (B\cup D)\setminus S$ a set $X\subseteq B\cup D$ is \emph{$(t,S)$-exchangeable} for $B$ and $D$ if $$t\in X\subseteq S+t$$ and both $B\triangle X$ and $D\triangle X$ are bases.

\begin{theorem}[Akrami, Liu, Raj, Végh]\label{thm:unweighted-exchange}
    Let $B,D$ be disjoint bases of a matroid $M=(E,\cI)$, and let $S\subseteq E$. If $|B\cap S|>|D\cap S|$, then there exist $t\in D\setminus S$ and a $(t,S)$-exchangeable set for $B$ and $D$. Moreover, such a set can be found in polynomial time.
\end{theorem}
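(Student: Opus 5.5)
The plan is to reduce the existence of a $(t,S)$-exchangeable set to a matroid intersection problem for each fixed candidate $t\in D\setminus S$. Then, using Edmonds' min--max theorem, we show that if every candidate $t$ fails, the resulting obstructions can be combined into a contradiction with $|B\cap S|>|D\cap S|$. Throughout we may restrict $M$ to $B\cup D$, so $M$ is a block matroid and $B,D$ are complementary blocks. For a set $X$ with $t\in X\subseteq S+t$, the new pair is $B'=B\triangle X=(B\setminus S)\cup Y$ and $D'=D\triangle X=(D\setminus(S+t))\cup((S+t)\setminus Y)$, where $Y=B'\cap(S+t)$ must contain $t$.

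\textbf{Step 1 (reformulation for fixed $t$).} Put $M_1^t\coloneqq (M/(B\setminus S))|_{S+t}$ and $M_2^t\coloneqq (M/(D\setminus(S+t)))|_{S+t}$. Then $X$ is $(t,S)$-exchangeable exactly when $Y$ is a basis of $M_1^t$, $(S+t)\setminus Y$ is a basis of $M_2^t$, and $t\in Y$. Equivalently, $Y$ is a common basis of $M_1^t$ and $(M_2^t)^*$ containing $t$. Since $B\cap S$ is a basis of $M_1^t$ and $(D\cap S)+t$ is a basis of $M_2^t$, both $M_1^t$ and $(M_2^t)^*$ have rank $\rho\coloneqq|B\cap S|$. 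The hypothesis $|B\cap S|>|D\cap S|$ is precisely what makes the rank count consistent with $t$ lying on the $B$-side.

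\textbf{Step 2 (forcing $t$).} To force $t\in Y$, I would contract $t$ in both matroids. We then need a common basis of size $\rho-1$ of $M_1^t/t$ and $(M_2^t)^*/t$ on $S$. This requires $t$ to be a non-loop in $M_1^t$ and a non-coloop of $M_2^t$, and candidates $t$ violating this are discarded explicitly. By Edmonds' matroid intersection theorem, failure for $t$ yields a set $A_t\subseteq S$ with
\[
r_{M_1^t/t}(A_t)+r_{(M_2^t)^*/t}(S\setminus A_t)\le \rho-2 .
\]
Rewriting this in terms of $r_M$, we get an inequality of the form
\[
r_M(A_t\cup (B\setminus S)+t)+\bigl(\text{a dual rank term on } S\setminus A_t \text{ relative to } D\setminus(S+t)\bigr)\le \text{(explicit count)}.
\]

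\textbf{Step 3 (combining over $t$; main obstacle).} The hard part is to aggregate these obstructions over all $t\in D\setminus S$. I would first uncross them: the family $\{A_t\}$ can be chosen closed under union or intersection by submodularity of both rank expressions, so a single set $A\subseteq S$ works for all $t$ simultaneously. Summing the rank inequalities then shows that $\operatorname{cl}(A\cup(B\setminus S))$ contains all of $D\setminus S$ while missing too much. Concretely, $D\setminus S$ together with $(D\cap S)\cap A$ would have to span a set of rank at least $r(M)$ using fewer than $|B\cap S|$ elements from $S$. Comparing with the basis $D$, which uses only $|D\cap S|$ elements of $S$, produces the contradiction $|B\cap S|\le|D\cap S|$.

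I expect this uncrossing-and-counting step to be the main difficulty, since the contracted matroids depend on $t$. If it resists, the fallback is induction on $|S|$. One picks $s\in B\cap S$ and a symmetric exchange partner via the standard symmetric exchange lemma, applies induction to a smaller $S$ in the exchanged pair, and composes the two exchanges.

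\textbf{Step 4 (algorithm).} For polynomial time, run a weighted matroid intersection algorithm for each of the at most $|D|$ candidates $t$. Giving $t$ a large weight forces it into the common basis whenever possible. The argument of Step 3 guarantees that at least one candidate succeeds.
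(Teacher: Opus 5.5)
The paper gives no proof of this statement; it is quoted from Akrami--Liu--Raj--V\'egh. So your proposal must stand on its own, and it has a genuine gap in Step~3, which carries the entire existence claim.

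Steps~1, 2 and~4 are fine. For fixed $t\in D\setminus S$, a $(t,S)$-exchangeable set exists iff $M_1^t$ and $(M_2^t)^*$ have a common basis containing $t$. Once existence is known, trying every $t$ with matroid intersection gives the algorithm.

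Individual candidates genuinely fail, however. Take $M=U_{1,2}\oplus U_{1,2}$ on parallel pairs $\{b_1,d_1\},\{b_2,d_2\}$ with $B=\{b_1,b_2\}$ and $S=\{b_2\}$: then $t=d_1$ fails and only $t=d_2$ works. Also, contrary to your remark in Step~1, the hypothesis $|B\cap S|>|D\cap S|$ plays no role there, since both ranks equal $|B\cap S|$ regardless. So everything depends on aggregating the failures.

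Write $r\coloneqq r_M$ on $M|_{B\cup D}$, and $B_T\coloneqq B\setminus S$, $D_T\coloneqq D\setminus S$, $B_S\coloneqq B\cap S$, $D_S\coloneqq D\cap S$. Unwinding your Edmonds certificate, $t$ fails iff some $A\subseteq S$ satisfies
\[
r(A\cup B_T+t)+r(A\cup D_T-t)+|S\setminus A|\le 2|B|-1 .
\]
This also covers your discarded candidates, via $A=\emptyset$ or $A=S$. The left-hand side is always at least $2|B|-1$, so the inequality forces four conditions:
\begin{itemize}
\item $t\in\operatorname{cl}(A\cup B_T)$;
\item $t$ is a coloop of $M|_{A\cup D_T}$;
\item $r(A\cup B_T)=|B_T|+|A\cap B_S|$;
\item $r(A\cup D_T)=|D_T|+|A\cap D_S|$.
\end{itemize}
The first survives enlarging $A$, and the second survives shrinking it. So there is no reason for $A_t\cup A_{t'}$ or $A_t\cap A_{t'}$ to certify both $t$ and $t'$. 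Submodularity lets you uncross minimizers of one fixed function, but here the function changes with $t$ through the contracted sets $B_T+t$ and $D_T-t$.

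This matters because your endgame is actually easy once a single $A$ certifies every $t\in D_T$. Then $D_T\subseteq\operatorname{cl}(A\cup B_T)$ gives $|B_T|+|A\cap B_S|\ge|D_T|+|A\cap D_S|$. All of $D_T$ being coloops of $M|_{A\cup D_T}$ gives $r(A)=|A\cap D_S|\ge|A\cap B_S|$. Together these yield $|B_T|\ge|D_T|$, i.e.\ $|B\cap S|\le|D\cap S|$, a contradiction. Your written version of this count does not follow as stated, but a correct version exists.

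Hence the one-line claim that ``a single set $A$ works for all $t$ simultaneously'' carries the full weight of the theorem, and the proposal offers no mechanism for it. This is where the special structure (all $M_1^t,M_2^t$ being minors of one block matroid with complementary contractions) would have to enter beyond a per-$t$ duality argument.

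The fallback does not close the gap either. After a symmetric exchange $s\leftrightarrow d$ with $d\in D\cap S$, induction on $S\setminus\{s,d\}$ may return $t=s$, which is not in $D\setminus S$ for the original pair. Keeping $s$ in $S$ instead destroys the strict inequality when $|B\cap S|=|D\cap S|+1$.
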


\begin{proof}[Proof of Theorem~\ref{thm:scalar-gap}]
     Suppose for the sake of contradiction that $q_{s+1}-q_s>1$ for some $s\in[m-1]$. Then the blocks can be partitioned into $\cF=\cF^+\sqcup\cF^-$, so that
    \begin{equation}\label{eq:initial-gap}
      \min_{B\in\cF^+}w(B)-\max_{D\in\cF^-}w(D)>1.
    \end{equation}
    For this fixed partition, consider the linear program
    \begin{equation*}\label{eq:primal-LP}
    \begin{aligned}
      \max\quad &\gamma\\
      \text{s.t.}\quad &u(B)-u(D)\ge\gamma
          &&\forall B\in\cF^+,\ D\in\cF^-,\\
      &0\le u_e\le1
          &&\forall e\in E.
    \end{aligned}
    \end{equation*}
    Its dual can be written as
    \begin{equation*}\label{eq:dual-LP}
    \begin{aligned}
      \min\quad &\sum_{e\in E}x_e\\
      \text{s.t.}\quad
      &\sum_{B\in\cF^+,\,D\in\cF^-}
           \lambda_{B,D}(\ind_B-\ind_D)\le x,\\
      &\sum_{B\in\cF^+,\,D\in\cF^-}\lambda_{B,D}=1,\\
      &x\ge0,\quad\lambda\ge0.
    \end{aligned}
    \end{equation*}
    Let $(u,\gamma)$ be an optimal primal solution, and set
    \[
      b\coloneqq \min_{B\in\cF^+}u(B),
      \qquad
      d\coloneqq \max_{D\in\cF^-}u(D).
    \]
    Then $\gamma=b-d>1$, since the original weight vector $w$ gives a feasible solution of value greater than one by \eqref{eq:initial-gap}.
    
    Let $(\lambda,x)$ be an optimal dual solution and define $y\coloneqq \sum_{B\in\cF^+,\,D\in\cF^-} \lambda_{B,D}(\ind_B-\ind_D)$. Notice that in the optimal solution, $x_e=\max\{0,y_e\}$. By complementary slackness,
    \begin{equation}\label{eq:signs}
      \begin{cases}
      y_e\le0& \text{if $u_e=0$},\\
      y_e=0& \text{if $0<u_e<1$},\\
      y_e\ge0& \text{if $u_e=1$}.
      \end{cases}
    \end{equation}
    Define
    \[
      S\coloneqq \{e\in E\colon u_e=1\}.
    \]
    Further, we define
    \[
    \begin{aligned}
      \cH^+&\coloneqq \{B\in\cF^+\colon \lambda_{B,D}>0
                      \text{ for some }D\in\cF^-\},\\
      \cH^-&\coloneqq \{D\in\cF^-\colon\lambda_{B,D}>0
                      \text{ for some }B\in\cF^+\}.
    \end{aligned}
    \]
    One can think of $\lambda$ as a distribution over $\cF^+\times \cF^-$, and $\cH^+$ and $\cH^-$ are the supports of the two marginals of $\lambda$ over $\cF^+$ and $\cF^-$, respectively. Then, we have
    \begin{equation}\label{eq:LP-duality}
    \begin{aligned}
        \gamma
        &=\sum_{B\in\cH^+,\,D\in\cH^-}
            \lambda_{B,D}\left(u(B)-u(D)\right)\\
        &=u^\top y\\
        &=y(S)\\
        &=\sum_{B\in\cH^+,\,D\in\cH^-}
            \lambda_{B,D}\left(|B\cap S|-|D\cap S|\right),
    \end{aligned}
    \end{equation}
    where the second equality uses complementary slackness \eqref{eq:signs}.

    We claim that
    \begin{equation}\label{eq:Hplus-count}
      |B\cap S|\le\frac{|S|}{2}
      \qquad\forall B\in\cH^+.
    \end{equation}
    Indeed, it follows from complementary slackness that for every $B\in\cH^+$, there exists $D\in\cF^-$ with $\lambda_{B,D}>0$ such that $u(B)-u(D)=\gamma$. This implies $u(B)=\gamma+u(D)\le\gamma+d=b$, which means that every block in $\cH^+$ attains minimum weight $b$. Assume for the sake of contradiction that $|B\cap S|>|S|/2$. According to Theorem~\ref{thm:unweighted-exchange}, there exists $e\in(E\setminus B)\setminus S$ and an $(e,S)$-exchangeable set $X$ for $B$ and $E\setminus B$. Then $B'\coloneqq B\mathbin\triangle X$ is a block. Since $B'$ and $B$ have the same cardinality and every element of $X-e$ has $u$-weight one,
    \[
      u(B')=u(B)+u_e-1\in\left[u(B)-1,u(B)\right)=\left[b-1,b\right).
    \]
    This is a contradiction, because we know every block has either weight at least $b$ if it is in $\cF^+$, or weight at most $d<b-1$ if it is in $\cF^-$. This proves \eqref{eq:Hplus-count}.
    
    The symmetric argument applies to show that
    \begin{equation}\label{eq:Hminus-count}
      |D\cap S|\ge\frac{|S|}{2}
      \qquad\forall D\in\cH^-.
    \end{equation}
    Plugging \eqref{eq:Hplus-count} and \eqref{eq:Hminus-count} into \eqref{eq:LP-duality}, we obtain $\gamma\le0$, which is a contradiction to $\gamma>1$.
\end{proof}

We now use Theorem~\ref{thm:scalar-gap} to prove Theorem~\ref{thm:weighted}. Starting from an arbitrary partition into bases, we repeatedly rebalance a heaviest and a lightest basis by applying Theorem~\ref{thm:scalar-gap} to their union.

\begin{proof}[Proof of Theorem~\ref{thm:weighted}]
    If $\max_{e\in E}w_e=0$, then the statement is immediate. By scaling we may therefore assume without loss of generality that $\max_{e\in E} w_e=1$. Start from an arbitrary partition into $k$ bases $E=B_1\sqcup\cdots\sqcup B_k$. If there exist $B_i,B_j$ such that $w(B_i)-w(B_j)>1$, pick $B_i$ with the maximum weight and $B_j$ with the minimum weight among the $k$ bases. Consider the block matroid $M'=(E',\cI')=M|_{B_i\sqcup B_j}$, which is the restriction of $M$ to $B_i\sqcup B_j$. We pick a partition $E'=B_i'\sqcup B_j'$ of $M'$ into bases that minimizes
    \[
        |w(B_i')-w(B_j')|
        =\left|w(B_i')-\left(w(E')-w(B_i')\right)\right|
        =2\left|w(B_i')-\frac{w(E')}{2}\right|.
    \]
    
    We claim that $|w(B_i')-w(B_j')|\le 1$. Indeed, picking $B_i'$ to be a maximum-weight block gives $w(B_i')-w(B_j')\ge 0$, while picking $B_i'$ to be a minimum-weight block gives $w(B_i')-w(B_j')\le 0$. If some block has weight $w(E')/2$, then the claim follows immediately. Otherwise, $0$ lies in the interval
    \[
    I_s\coloneqq
    \left[q_s-\frac{w(E')}{2},q_{s+1}-\frac{w(E')}{2}\right]
    \]
    for some consecutive values $q_s,q_{s+1}\in\{w(B)\colon B\text{ is a block of }M'\}$. By Theorem~\ref{thm:scalar-gap}, $0<q_{s+1}-q_s\le 1$. Hence the endpoint of $I_s$ that is closer to $0$ satisfies $\left|q_t-w(E')/2\right|\le 1/2$ for some $t\in\{s,s+1\}$. Letting $B_i'$ be a block with weight $q_t$ gives $|w(B_i')-w(B_j')|\le 1$.
    
    After we replace the partition $B_i\sqcup B_j$ with $B_i'\sqcup B_j'$, either $\max_t w(B_t)$ decreases or $\max_t w(B_t)$ remains the same but the number of bases attaining the maximum weight decreases. Since there are only finitely many partitions of $E$ into bases, the process terminates. At termination, we have a partition such that $|w(B_i)-w(B_j)|\le 1$ for all $i,j\in[k]$. It follows that
    \[
    \left|w(B_i)-\frac{w(E)}{k}\right|
    =\left|\frac1k\sum_{j=1}^k\left(w(B_i)-w(B_j)\right)\right|
    \le\frac{k-1}{k}. \qedhere
    \]
\end{proof}

For a matroid $M=(E,\mathcal I)$ whose rank function is $r:2^E\rightarrow \Z_{\ge 0}$, its \emph{base polytope} is defined as
\[
    P(M)\coloneqq \{x\in\mathbb R_{\ge 0}^E\colon x(S)\le r(S)\ \forall S\subseteq E,\quad x(E)=r(E)\}.
\]
Edmonds' matroid partition theorem implies that a matroid $M$ can be partitioned into $k$ bases if and only if $\frac{1}{k}\cdot\ind\in P(M)$ \cite{edmonds1965minimum}. In particular, if $M$ is a block matroid, then $\frac{1}{2}\cdot\ind\in P(M)\cap P(M^*)$. The following is a direct corollary of Theorem~\ref{thm:scalar-gap}, which generalizes Theorem~\ref{thm:weighted} for the case where $k=2$.

\begin{corollary}\label{cor:equitability-block-matroid}
    Let $M=(E,\mathcal{I})$ be a block matroid, and let $w\in \R_{\ge 0}^E$. For every $x\in P(M)\cap P(M^*)$, there exists a block $B$ of $M$ such that $|w(B)-w^\top x|\le \frac{1}{2}\cdot\max_{e\in E}w_e$.
\end{corollary}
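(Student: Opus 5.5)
The plan is to reduce the corollary to Theorem~\ref{thm:scalar-gap} by showing that $w^\top x$ lies between the minimum and maximum block weights. The trivial case $\max_e w_e=0$ is immediate. Otherwise we scale so that $\max_e w_e=1$, which places $w\in[0,1]^E$ in the setting of Theorem~\ref{thm:scalar-gap}.

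\textbf{Step 1: $w^\top x$ lies in the range of block weights.} The key fact is that $P(M)\cap P(M^*)$ is the convex hull of the common bases of $M$ and $M^*$. This follows from Edmonds' matroid intersection theorem: the intersection of two matroid base polytopes is the common-base polytope. In our setting the common bases are exactly the blocks. Hence $x$ is a convex combination of vectors $\ind_B$ with $B\in\cF$, and by linearity
\[
q_1=\min_{B\in\cF}w(B)\le w^\top x\le \max_{B\in\cF}w(B)=q_m .
\]
The family $\cF$ is nonempty because $M$ is a block matroid.

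\textbf{Step 2: locate $w^\top x$ between consecutive block weights.} If $w^\top x$ equals some $q_s$, we take a block of that weight and are done. Otherwise $q_s<w^\top x<q_{s+1}$ for some $s\in[m-1]$. By Theorem~\ref{thm:scalar-gap} we have $q_{s+1}-q_s\le 1$. Therefore one of the two endpoints lies within distance $\tfrac12$ of $w^\top x$, since the two distances sum to at most $1$. Choosing a block of that weight gives $|w(B)-w^\top x|\le\tfrac12$, and undoing the scaling gives the bound $\tfrac12\max_e w_e$.

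\textbf{Main obstacle.} The only nontrivial ingredient is Step~1, namely the integrality of $P(M)\cap P(M^*)$. I would simply cite Edmonds' matroid intersection polytope theorem: the vertices of $P(M)\cap P(M^*)$ are integral, and integral points of the intersection are characteristic vectors of sets that are bases in both $M$ and $M^*$. A base of $M^*$ is the complement of a base of $M$, so such a set is a block. Everything else is the same interval argument already used in the proof of Theorem~\ref{thm:weighted}.
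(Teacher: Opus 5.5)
Your proposal is correct and follows essentially the paper's argument. The paper also uses the integrality of the matroid intersection polytope to sandwich $w^\top x$ between a maximum-weight block $D$ and its complement $E\setminus D$, which is a minimum-weight block. It then takes the nearer of two consecutive block weights, using the gap bound from Theorem~\ref{thm:scalar-gap}.
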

\begin{proof}
    Scale the weights so that $\max_{e\in E} w_e=1$. Since $x\in P(M)\cap P(M^*)$, by the integrality of matroid intersection polytope~\cite{edmonds2003submodular}, the maximum weighted common base $D$ of $M$ and $M^*$ satisfies $w(D)\ge w^\top x$. In other words, $D$ is a block of $M$ of maximum weight. Then, $E\setminus D$ is a block of minimum weight. Therefore, $w(D)\ge w^\top x\ge w(E\setminus D)$. Let $q_1<\cdots<q_m$ be distinct weights of blocks and suppose $q_s\le w^\top x<q_{s+1}$. It follows from Theorem~\ref{thm:scalar-gap} that $q_{s+1}-q_s\le 1$. Let $B$ be a block whose weight is either $q_s$ or $q_{s+1}$ whichever is closer to $w^\top x$. Then, $|w(B)-w^\top x|\le \frac{q_{s+1}-q_s}{2}\le \frac{1}{2}$.
\end{proof}
One can ask whether Corollary \ref{cor:equitability-block-matroid} generalizes to the intersection of two general matroids $M_1,M_2$. Namely, given a fractional $x\in P(M_1)\cap P(M_2)$, can we find a common basis $B$ of $M_1$ and $M_2$ such that $|w(B)-w^\top x|$ is small? It turns out that  the assumption that one matroid is the dual of the other is crucial. Consider the example of bipartite matching which is the intersection of two partition matroids. Let the entire graph be an even cycle of length $n$, and let $w$ be the indicator vector of the even edges in the cycle. Then, $x=\frac{1}{2}\cdot \ind$ is in the matroid intersection polytope, and $w^\top x=\frac{n}{4}$. Then, the common basis $B$ is either the matching consisting of even edges whose weight is $\frac{n}{2}$ or the one consisting of odd edges whose weight is $0$, both having $|w(B)-w^\top x|=\frac{n}{4}$.
\section{Strongly polynomial-time algorithm}
\label{sec:algorithm}

We now give a strongly polynomial-time algorithm that finds an equitable partition. The algorithm uses the algorithmic version of Theorem~\ref{thm:unweighted-exchange} as a subroutine and makes only a strongly polynomial number of calls to it.

\begin{theorem}
    For rational weights, an equitable partition satisfying \eqref{eq:weighted-equitability} can be found in strongly polynomial time.
\end{theorem}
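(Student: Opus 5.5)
We follow the truncation scheme sketched in the introduction. Assume integral weights and, for a parameter $\tau\ge 0$, set $w^\tau_e\coloneqq\min\{w_e,\tau\}$. The invariant is
\[
|w^\tau(B_i)-w^\tau(B_j)|\le\tau\qquad\forall i,j\in[k].
\]
It holds trivially at $\tau=0$ for an arbitrary initial partition into bases, which we obtain by matroid partition in strongly polynomial time. Once $\tau=w_{\max}$, we have $w^\tau=w$, so the invariant gives \eqref{eq:weighted-equitability}.

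Let $S_\tau\coloneqq\{e:w_e>\tau\}$ be the active set. On any interval of $\tau$ where $S_\tau$ is constant, $w^\tau(B_i)$ is affine in $\tau$ with slope $|B_i\cap S_\tau|$. Hence $w^\tau(B_i)-w^\tau(B_j)$ has slope $|B_i\cap S|-|B_j\cap S|$, which is an integer. If this slope is at most $1$, increasing $\tau$ cannot break the corresponding inequality. The algorithm therefore moves $\tau$ to the next event, which is either the next distinct weight value (where $S$ changes) or the first $\tau$ at which some pair has $w^\tau(B_i)-w^\tau(B_j)=\tau$ with $|B_i\cap S|-|B_j\cap S|\ge 2$. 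Each event is computed from a linear equation.

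\textbf{Repair step.} Take a tight pair $i,j$ with $|B_i\cap S|>|B_j\cap S|$. Choose $i$ to be a heaviest basis and $j$ a lightest one among the tight pairs, and apply Theorem~\ref{thm:unweighted-exchange} to the disjoint bases $B_i,B_j$ and the set $S$. This yields $t\in B_j\setminus S$ and a $(t,S)$-exchangeable $X$. The new bases $B_i\triangle X$ and $B_j\triangle X$ move $t$ from $B_j$ to $B_i$ and shift $|X|-1$ active elements, each of $w^\tau$-weight exactly $\tau$, between the two bases. Since $|B_i|=|B_j|$, we have $|X\cap B_i|=|X\cap B_j|$, so $|X\cap S\cap B_i|-|X\cap S\cap B_j|=1$. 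Consequently $w^\tau(B_i)$ drops by $\tau-w_t\in[0,\tau]$, $w^\tau(B_j)$ rises by the same amount, and the active counts change by $-1$ and $+1$ respectively.

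\textbf{Verifying the invariant after repair.} Since the two weights were exactly $\tau$ apart and each moves by at most $\tau$ toward the other, their new difference lies in $[-\tau,\tau]$. Both new weights also lie in $[w^\tau(B_j),w^\tau(B_i)]$. If $B_i$ and $B_j$ are the global maximum and minimum, every pairwise difference stays at most $\tau$. This is why the heaviest/lightest choice matters. I would argue that a violation with slope $\ge 2$ can always be localized to the global extremes: the difference between max and min is $\le\tau$, and any violating pair is tight, so it must itself attain the extremes.

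\textbf{Termination and polynomiality.} Define the potential $\Phi\coloneqq\sum_i |B_i\cap S|^2$. A repair changes counts $a>c+1$ to $a-1$ and $c+1$, so $\Phi$ strictly decreases by $2(a-c)-2\ge 2$. $\Phi$ is at most $n^2$ and is integral. When $S$ shrinks at a breakpoint, $\Phi$ only decreases. There are at most $n$ breakpoints and at most $n^2$ repairs overall, and each repair makes one call to the algorithm of Theorem~\ref{thm:unweighted-exchange} plus $O(k^2)$ arithmetic. The main obstacle I expect is ensuring that, at a single $\tau$, repairs do not cycle and always restore the invariant on a right-neighbourhood of $\tau$, i.e.\ that after finitely many repairs every tight pair has slope $\le 1$. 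The potential argument handles this, since each repair strictly lowers $\Phi$ while $\tau$ stays fixed. The delicate part is the claim that the tight violating pair can be taken to be global max/min; if needed, I would instead repair the pair (max, min) whenever it is tight with slope $\ge2$ and argue that other tight pairs then have slope $\le1$.
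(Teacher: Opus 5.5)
Your proposal is correct and follows the paper's proof essentially step for step. The paper uses the same truncation invariant $|w^\tau(B_i)-w^\tau(B_j)|\le\tau$, the same breakpoint rule (a tight pair with $|B_i\cap S|-|B_j\cap S|\ge 2$), the same repair via Theorem~\ref{thm:unweighted-exchange} with weight shift $\tau-w_t$, the same observation that any tight pair automatically attains the global maximum and minimum (so your fallback is unnecessary), and the same potential $\Phi=\sum_i|B_i\cap S|^2$ to bound the number of iterations.
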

\begin{proof}
For $0\le \tau\le \max_{e\in E}w_e$, define the truncated weight at $\tau$ as $w^{\tau}_e=\min\{w_e,\tau\}$. We give an algorithm so that for every $\tau$, we maintain a partition $E=B_1\sqcup \cdots \sqcup B_k$ into bases satisfying
\begin{equation}\label{eq:equitable_leve_tau}
    |w^\tau(B_i)-w^\tau(B_j)|\le \tau
\end{equation}
for all $i,j\in[k]$. For a fixed $\tau$, define the active set as
\[
S\coloneqq \{e\in E\colon w_e>\tau\}.
\]
For each $B_i$, let \[a_i\coloneqq w(B_i\setminus S), \qquad b_i\coloneqq |B_i\cap S|.\] 
Then,
\[
w^\tau(B_i)=a_i+b_i\tau.
\]
When $\tau$ is between two consecutive weights in $\{w_e\colon e\in E\}$, $S$ stays the same and the truncated weight of a basis $B_i$ is a linear function in $\tau$. Condition \eqref{eq:equitable_leve_tau} is equivalent to
\[
-\tau\le a_i-a_j+(b_i-b_j)\tau\le \tau\qquad\forall i,j\in [k], 
\]
that is,
\begin{subequations}\label{eq:linear}
\begin{align}
    (b_i-b_j-1)\tau&\le a_j-a_i \qquad\forall i,j\in [k]\label{eq:linear1},\\
    (b_i-b_j+1)\tau&\ge a_j-a_i \qquad\forall i,j\in [k]\label{eq:linear2}.
\end{align}
\end{subequations}
Initialize with $\tau=0$ and an arbitrary partition $E=B_1\sqcup \cdots \sqcup B_k$, which can be computed in strongly polynomial time using Edmonds' matroid partition algorithm \cite{edmonds2009matroid}. Then, \eqref{eq:equitable_leve_tau} is satisfied trivially since $w^0=0$. We increase $\tau$ continuously and update the partition if necessary. Suppose \eqref{eq:linear} is satisfied for the current $\tau$. As we increase $\tau$ by a sufficiently tiny amount $\varepsilon>0$, \eqref{eq:linear1} is violated for some $i,j\in [k]$ if and only if
\begin{equation}\label{eq:breakpoint1}
    b_i-b_j-1>0\qquad \text{and}\qquad (b_i-b_j-1)\tau=a_j-a_i.
\end{equation}
Similarly, \eqref{eq:linear2} is violated for $\tau+\varepsilon$ for some $i,j\in [k]$ if and only if
\begin{equation}\label{eq:breakpoint2}
    b_i-b_j+1<0\qquad \text{and}\qquad (b_i-b_j+1)\tau=a_j-a_i.
\end{equation}
Notice that $(i,j)$ satisfies \eqref{eq:breakpoint1} if and only if $(j,i)$ satisfies \eqref{eq:breakpoint2}. Thus it suffices to consider the case where \eqref{eq:breakpoint1} occurs.

For such $\tau=\frac{a_j-a_i}{b_i-b_j-1}$, we need to repair the partition. Since $|B_i\cap S|-|B_j\cap S|=b_i-b_j>1$, it follows from Theorem~\ref{thm:unweighted-exchange} that there exists $t\in B_j\setminus S$ and $X$ with $t\in X\subseteq S+t$ so that $B_i'\coloneqq B_i\triangle X$ and $B_j'\coloneqq B_j\triangle X$ are both bases. The new partition decreases $|B_i\cap S|-|B_j\cap S|$ by $2$. We repeatedly apply Theorem~\ref{thm:unweighted-exchange} and recompute $a_i,b_i, a_j,b_j$ until \eqref{eq:breakpoint1} is no longer satisfied. This completes the repair steps. If no repair step is needed, we increase $\tau$. The algorithm is presented as Algorithm~\ref{alg:weighted-equitability}.

\begin{algorithm}[htbp]
\caption{\textsc{Weighted Equitability}}
\label{alg:weighted-equitability}
\begin{algorithmic}[1]
\Require A matroid \(M=(E,\cI)\) whose ground set can be partitioned into $k$ bases, and weights \(w\in\mathbb{Q}_{\geq0}^E\).
\Ensure An equitable partition of $E$ into $k$ bases.

\State Initialize with \(\tau\gets0\) and an arbitrary partition $E=B_1\sqcup\cdots\sqcup B_k$.
\State \(W\gets\max_{e\in E}w_e\)

\While{\(\tau<W\)}
    \State \(S\gets\{e\in E:w_e>\tau\}\)
    \State \(a_i\gets w(B_i\setminus S),\ b_i\gets |B_i\cap S|\quad \forall i\in [k]\)

    \If{\(\exists i,j\in[k]\) such that $b_i-b_j>1$ and $(b_i-b_j-1)\tau=a_j-a_i$}
        \State Compute a $(t,S)$-exchangeable set $X$ for $B_i$ and $B_j$ for some $t\in B_j\setminus S$ using Theorem~\ref{thm:unweighted-exchange}.
        \State \(B_i\gets B_i\triangle X\), \(B_j\gets B_j\triangle X\)
        \State \textbf{continue}
        \Comment{Recompute \(a,b\) at the same \(\tau\)}
    \EndIf

    \State $\alpha\gets \min\left\{\frac{a_j-a_i}{b_i-b_j-1}\colon i,j\in [k] \text{ such that } b_i-b_j-1>0 \text{ and }\frac{a_j-a_i}{b_i-b_j-1}>\tau\right\}$
    \State $\beta\gets\min\{w_e:e\in E \text{ such that }w_e>\tau\}$
    \State $\tau\gets \min\{\alpha,\beta\}$.
    \Comment{Advance to the next breakpoint}
\EndWhile

\State \Return \(B_1,\ldots,B_k\)
\end{algorithmic}
\end{algorithm}

First, we claim that \eqref{eq:equitable_leve_tau} is maintained throughout the algorithm. This will prove the theorem by taking $\tau=\max_{e\in E}w_e$. Indeed, when we do a repair step, 
\begin{equation}\label{eq:diff-before}
    w^\tau(B_i)-w^\tau(B_j)=a_i-a_j+(b_i-b_j)\tau=\tau.
\end{equation}
After repairing,
\begin{equation}\label{eq:diff-after}
w^\tau(B'_i)-w^\tau(B'_j)=w^\tau(B_i)-w^\tau(B_j)-2(\tau-w_t)=\tau-2(\tau-w_t)\in [-\tau,\tau].
\end{equation}
Here we used the fact that $0\le w_t\le\tau$, since $t\notin S$. Moreover, before the repair, \eqref{eq:equitable_leve_tau} and \eqref{eq:diff-before} imply that
\[
    w^\tau(B_j)\le w^\tau(B_h)\le w^\tau(B_i)
\]
for all $h\in[k]$. The repair decreases $w^\tau(B_i)$ by $\tau-w_t$ and increases $w^\tau(B_j)$ by the same amount. Hence both $w^\tau(B_i')$ and $w^\tau(B_j')$ remain in the interval $[w^\tau(B_j),w^\tau(B_i)]$, while the weights of all other bases remain unchanged. Thus, after repairing, \eqref{eq:equitable_leve_tau} is satisfied. Moreover, after all repair steps at the current value of $\tau$ have been completed, every tight inequality in \eqref{eq:linear1} satisfies $b_i-b_j\le1$, and every tight inequality in \eqref{eq:linear2} satisfies $b_i-b_j\ge-1$. Therefore, no inequality in \eqref{eq:linear} is violated when we increase $\tau$, as the left-hand sides change in directions that preserve the inequalities.

We are left to prove that the algorithm terminates. We define the potential to be $\Phi=\sum_{i=1}^k\left |B_i\cap S\right|^2$. When we do a repair step for bases $B_i, B_j$ where $b_i-b_j-1>0$, 
\[
\begin{aligned}
    \Phi'-\Phi&=|B'_i\cap S|^2+|B'_j\cap S|^2-|B_i\cap S|^2-|B_j\cap S|^2\\
    &=(b_i-1)^2+(b_j+1)^2-b_i^2-b_j^2\\
    &=-2(b_i-b_j-1)\\
    &<0.
\end{aligned}
\]
Thus, the potential decreases by at least $1$ at every repair step. When the active set $S$ changes as $\tau$ increases, at least one element leaves $S$, and hence $\Phi$ also decreases by at least $1$. Since $\Phi\le kr^2$, the total number of repair steps and times the active set changes is at most $kr^2$. Every increase of $\tau$ reaches either a breakpoint at which a repair step is needed or the next element weight, at which the active set changes. Therefore, the algorithm performs $O(kr^2)=O(|E|r)$ iterations. 
Each repair step can be carried out in strongly polynomial time using the algorithm from Theorem~\ref{thm:unweighted-exchange}, and the next breakpoint can also be computed in strongly polynomial time. Since the initial partition can be found by a strongly polynomial matroid-partition algorithm, the entire algorithm is strongly polynomial.
\end{proof}

\section{Applications}
\label{sec:applications}

We give two applications of weighted equitability: the first concerns balanced schedules of matroid-constrained jobs on identical machines, while the second gives fair division under a common matroid constraint.

\subsection{Matroid-constrained load balancing}
\label{sec:load}

We consider the following scheduling problem on $m$ identical machines. The jobs form the ground set $J$ of a matroid $M=(J,\cI)$, and job $e\in J$ has processing time $p_e\in[0,1]$. A feasible schedule is a partition $J=I_1\sqcup\dots\sqcup I_m$ such that $I_i\in\cI$ for every $i\in[m]$. The \emph{load} of machine $i$ is $p(I_i)=\sum_{e\in I_i}p_e$, and the \emph{makespan} is the $\max_{i\in[m]}p(I_i)$. Since the machines are identical and there are no release dates or precedence constraints, the order of the jobs assigned to a machine is irrelevant for the makespan. This is a min--max matroid partitioning problem; see~\cite{kawase2021optimal}.

Our main theorem gives a schedule whose makespan is within an additive error $(1-\frac{1}{m})p_{\max}$ of the optimum, where $p_{\max}\coloneqq\max_{e\in J}p_e$.

\begin{theorem}\label{thm:schedule}
Let $M=(J,\cI)$ be a matroid whose ground set can be partitioned into $m$ independent sets, and let $p\in\R_{\ge 0}^J$. There is a polynomial time algorithm for matroid-constrained makespan minimization for identical machines that computes a feasible schedule $J=I_1\sqcup\dots\sqcup I_m$ such that the makespan
\[
\max_{i\in[m]}p(I_i)\leq \opt+(1-\frac{1}{m})p_{\max}.
\]
\end{theorem}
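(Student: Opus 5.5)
The plan is to reduce to Theorem~\ref{thm:weighted} (in its algorithmic form from Section~\ref{sec:algorithm}) by padding the matroid with zero-weight dummy elements, so that any feasible schedule becomes a partition into bases of a larger matroid. Let $r$ be the rank of $M$. Form the matroid $N$ on $J\sqcup Z$, where $Z$ is a set of $mr$ new dummy jobs with processing time $0$. Take $N$ to be the rank-$r$ truncation of $M\oplus U_{mr,mr}$, so a set $A\subseteq J\sqcup Z$ is independent in $N$ if and only if $A\cap J\in\cI$ and $|A|\le r$.

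First I will check that $J\sqcup Z$ partitions into $m$ bases of $N$. Take any feasible schedule $J=I_1\sqcup\cdots\sqcup I_m$; one can be computed in strongly polynomial time by Edmonds' matroid partition algorithm. Pad each $I_i$ with $r-|I_i|$ dummies, getting a basis of $N$. This uses $mr-|J|$ dummies, so I will instead take $|Z|=mr-|J|$; this count is nonnegative because $|J|=\sum_i|I_i|\le mr$. Then every partition of $J\sqcup Z$ into $m$ bases of $N$ restricts on $J$ to a feasible schedule with the same loads, since dummies weigh $0$ and each basis meets $J$ in an independent set of $M$. Conversely, every feasible schedule, in particular an optimal one, pads to such a partition.

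Next I apply Theorem~\ref{thm:weighted} (via the algorithm, since $p$ may be taken rational) to $N$ with weights $p$ extended by $0$ on $Z$. This yields bases $B_1,\dots,B_m$ with $|p(B_i)-p(J)/m|\le(1-\tfrac1m)p_{\max}$, by \eqref{eq:average-bound}. Setting $I_i=B_i\cap J$ gives makespan at most $p(J)/m+(1-\tfrac1m)p_{\max}$. Every schedule has makespan at least the average load, so $\opt\ge p(J)/m$, and the bound follows.

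The running time is polynomial because an independence oracle for $N$ is simulated by one oracle call to $M$ plus a cardinality check, and $|Z|\le m r$ is polynomial; I can assume $m\le|J|$ without loss, since extra machines may stay empty. The only mildly delicate step is that the truncation construction really yields a matroid whose partitions into bases correspond exactly to feasible schedules. This is routine, since direct sums and truncations of matroids are matroids, so I do not expect a genuine obstacle.
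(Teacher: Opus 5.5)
Your proposal is correct and follows the paper's proof essentially step for step: pad with $mr-|J|$ zero-weight dummies, take $N=(M\oplus U_{d,d})_r$, apply Theorem~\ref{thm:weighted} (via the algorithm of Section~\ref{sec:algorithm}) to get the average bound \eqref{eq:average-bound}, restrict to $J$, and use $\opt\ge p(J)/m$. Your remark that one may assume $m\le|J|$, so that the number of dummies stays polynomial, is a small refinement that the paper does not make explicit.
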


\begin{proof}
    Let $r=r(M)$ and set $d=mr-|J|$. Note that $d\geq 0$ by the assumption that $J$ can be partitioned into $m$ independent sets. Add a set $D$ of $d$ dummy elements, each of processing time zero, and define
    \[
        N\coloneqq (M\oplus U_{d,d})_r.
    \]
    By the definition of truncation,
    \begin{equation}\label{eq:dummy-rank}
    r_N(A)=\min\{r,r_M(A\cap J)+|A\cap D|\}
    \qquad\forall A\subseteq J\cup D.
    \end{equation}
    In particular, $N|_J=M$. Fix any feasible partition $J=I_1\sqcup\cdots\sqcup I_m$. Since each $I_i$ is independent, $|I_i|\le r$, and
    \[
      \sum_{i=1}^m(r-|I_i|)=mr-|J|=d.
    \]
    We may therefore partition $D=D_1\sqcup\cdots\sqcup D_m$ so that $|D_i|=r-|I_i|$ for every $i$. Then $I_i\cup D_i$ is a basis of $N$, and hence $J\cup D$ can be partitioned into $m$ bases of $N$.
    
    Apply Theorem~\ref{thm:weighted} to $N$, extending $p$ by zeros on $D$. We obtain a partition
    \[
      J\cup D=B_1\sqcup\cdots\sqcup B_m
    \]
    into bases of $N$ such that
    \[
    \max_{i\in[m]}p(B_i) \leq \frac{p(J\cup D)}{m}+\left(1-\frac{1}{m}\right)p_{\max}.
    \]
    Let $I_i=B_i\cap J$. 
    Since $N|J=M$, each $I_i$ is independent in $M$, and the $I_i$ form a partition of $J$. The dummy elements have processing time zero, so $p(I_i)=p(B_i)$. Therefore,
    \[
    \max_{i\in[m]}p(I_i) \leq \frac{p(J)}{m}+\left(1-\frac{1}{m}\right)p_{\max} \leq \opt+\left(1-\frac{1}{m}\right)p_{\max},
    \]
    where the second inequality follows from the lower bound $\opt\geq p(J)/m$. The running time then follows: a feasible partition can be found by matroid partition, and independence in $N$ can be tested using the rank of the truncated matroid.
\end{proof}
The following is a direct corollary of Theorem \ref{thm:schedule} and the fact that $p_{\max}\le \opt$.
\begin{corollary}
    There is a $(2-\frac{1}{m})$-approximation algorithm for matroid-constrained makespan minimization for $m$ identical machines.
\end{corollary}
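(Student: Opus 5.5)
The corollary is a direct combination of Theorem~\ref{thm:schedule} with the trivial lower bound $p_{\max}\le\opt$. I would carry out the following steps in order.

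\emph{Step 1: the lower bound.} The job $e^*$ attaining $p_{\max}=\max_{e\in J}p_e$ lies in some independent set $I_i$ of every feasible schedule. Since processing times are nonnegative, the load of that machine is at least $p_{e^*}$. Hence every feasible schedule has makespan at least $p_{\max}$, and so $\opt\ge p_{\max}$.

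\emph{Step 2: invoke Theorem~\ref{thm:schedule}.} The algorithm of Theorem~\ref{thm:schedule} returns, in polynomial time, a feasible schedule with makespan at most $\opt+(1-\frac1m)p_{\max}$. The hypothesis that $J$ can be partitioned into $m$ independent sets is precisely feasibility of the instance. If the instance is infeasible, matroid partition detects this in polynomial time, so there is nothing to approximate.

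\emph{Step 3: combine.} Substituting $p_{\max}\le\opt$ from Step 1 into the bound of Step 2 gives makespan at most $\opt+(1-\frac1m)\opt=(2-\frac1m)\opt$. The degenerate case $\opt=0$ needs no separate treatment: then $p_{\max}=0$, and the returned schedule also has makespan $0$.

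No step is a genuine obstacle; the whole argument is the one-line substitution in Step 3.
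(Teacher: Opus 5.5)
Your proposal is correct and is the same argument as the paper's: the corollary follows from Theorem~\ref{thm:schedule} combined with the lower bound $p_{\max}\le\opt$. Your remarks on infeasible instances and on the case $\opt=0$ are harmless additions.
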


\subsection{Fair division}
\label{sec:fair}

We next interpret the elements as indivisible goods and consider envy-free allocations under a common matroid constraint. Let $M=(E,\cI)$ be a matroid. There are $k$ agents, each with a valuation function $v_i\colon 2^E\to\R_{\geq 0}$. A \emph{feasible allocation} $\mathcal{P}=\{B_i\}_{i=1}^k$ is a partition of $E$ into $k$ bases $E=B_1\sqcup \cdots \sqcup B_k$. The allocation is  \emph{envy-free up to one good (EF1)} if for every $i,j\in [k]$, there exists a good $e\in B_j$ such that $v_i(B_i)\geq v_i(B_j-e)$. 

We consider the special case of identical additive valuations. In this setting, there is a weight vector $w\in \R_{\ge 0}^E$ such that $v_i(X)=\sum_{e\in X}w_e$ for every $X\subseteq E$ and $i\in[k]$. Then, an allocation is EF1 if for every $i,j\in[k]$, there exists a good $e\in B_j$ such that $w(B_i)\geq w(B_j-e)$. 

Matroid-constrained EF1 allocations were first studied by Biswas and Barman~\cite{biswas2018fair}. They gave a polynomial-time algorithm for arbitrary additive valuations under partition matroid constraints and, in the case of identical additive valuations, under laminar matroid constraints. Dror, Feldman, and Segal-Halevi~\cite{dror2023fair} proved the existence of EF1 allocations under identical base-orderable matroid constraints for two agents with additive valuations and for three agents with binary additive valuations. More recently, Akrami, Liu, Raj, and V\'egh~\cite{AkramiLiuRajVegh2026} established the existence of EF1 allocations for arbitrary matroids with identical tri-valued additive valuations. Oki and Schwarcz~\cite{oki2025generalizing} obtained an EF1 allocation for two agents under a matroid representable over a field of characteristic zero, assuming that at least one of the two valuations is additive. 

The following theorem establishes EF1 for an arbitrary number of agents, arbitrary matroids, and identical additive valuations.

\begin{theorem}
    Let $M=(E,\cI)$ be a matroid whose ground set can be partitioned into $k$ bases, and let $w\in\R_{\ge 0}^E$. Then there exists a matroid-constrained EF1 allocation, and such an allocation can be found in strongly polynomial time.
\end{theorem}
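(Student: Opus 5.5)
We would not use Theorem~\ref{thm:weighted} as a black box, since it only controls differences by $\max_{e\in E}w_e$. Instead, we would rerun Algorithm~\ref{alg:weighted-equitability} and prove a stronger, bundle-dependent invariant. The goal is to show that at every moment of the algorithm, at level $\tau$ and for all $i,j\in[k]$,
\begin{equation*}
  w^\tau(B_i)\ \ge\ w^\tau(B_j)-\max_{e\in B_j}w^\tau_e .
\end{equation*}
At termination we have $\tau=W=\max_{e\in E}w_e$, so $w^\tau=w$, and the invariant reads $\min_i w(B_i)\ge w(B_j)-\max_{e\in B_j}w_e$. Taking $e$ to be a heaviest element of $B_j$ gives $w(B_i)\ge w(B_j-e)$, which is EF1. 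Strong polynomiality is inherited directly from the algorithm's analysis in Section~\ref{sec:algorithm}.

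To verify the invariant we split according to $b_j=|B_j\cap S|$, where $S=\{e\colon w_e>\tau\}$ is the active set.
\begin{itemize}
  \item If $b_j\ge1$, then some element of $B_j$ has $w_e>\tau$, so $\max_{e\in B_j}w^\tau_e=\tau$. The invariant then coincides with \eqref{eq:equitable_leve_tau}, which the algorithm already maintains.
  \item If $b_j=0$, every element of $B_j$ has weight at most $\tau$. Hence $w^\tau(B_j)=w(B_j)$ and $\max_{e\in B_j}w^\tau_e=m_j\coloneqq\max_{e\in B_j}w_e$, and both stay frozen for as long as $B_j$ is unchanged. It therefore suffices to show that $\mu(\tau)\coloneqq\min_i w^\tau(B_i)$ never decreases during the run, and that the inequality $\mu\ge w(B_j)-m_j$ holds at the moment $B_j$ first attains $b_j=0$ in its current form.
\end{itemize}

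Monotonicity of $\mu$ comes from two observations. First, raising $\tau$ with the partition fixed can only increase each $w^\tau(B_i)$. Second, in a repair step the analysis after \eqref{eq:diff-after} shows that the new values $w^\tau(B_i'),w^\tau(B_j')$ lie in $[w^\tau(B_j),w^\tau(B_i)]$, where $B_j$ was a minimum bundle. So the minimum is not lowered by a repair.

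Next we check how a bundle can come to have $b_j=0$.
\begin{itemize}
  \item It cannot happen as the rich side of a repair: that requires $b_j\ge2$ beforehand, so afterwards $b_j\ge1$.
  \item Conversely, if a bundle with $b_j=0$ is the poor side of a repair, it receives active elements, lands in the case $b_j\ge1$, and is covered by \eqref{eq:equitable_leve_tau} again.
  \item So the only way $b_j$ drops to $0$ is that $\tau$ reaches $m_j$ and the last heavy element leaves $S$ (or $b_j=0$ already at $\tau=0$ with $m_j=0$, which is trivial). At that moment $\tau=m_j$ and $w^\tau(B_j)=w(B_j)$. Condition \eqref{eq:equitable_leve_tau} then gives $\mu\ge w(B_j)-\tau=w(B_j)-m_j$, which is exactly what we need.
\end{itemize}
From then on the inequality persists, because $\mu$ is nondecreasing while the right-hand side is frozen.

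The main obstacle is making sure no event silently breaks this bookkeeping. In particular, a bundle may repeatedly gain active elements in repairs and lose them again as $\tau$ passes their weights. The proof therefore has to track the current $m_j$ each time $b_j$ returns to $0$, and confirm that this always occurs precisely at $\tau=m_j$. The remaining arguments are short once the case split on $b_j$ is in place.
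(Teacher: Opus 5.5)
Your proposal is correct and is essentially the paper's argument. Both rest on two facts: $\min_i w^\tau(B_i)$ is nondecreasing, and a repair leaves both bundles with active elements, so a bundle is frozen once $\tau$ reaches its maximum weight, at which point \eqref{eq:equitable_leve_tau} yields the EF1 inequality. The paper applies this directly to each final bundle $\bar B_j$ at $\tau=\max_{e\in\bar B_j}w_e$, using the last repair involving $j$. That avoids the bookkeeping over repeated returns to $b_j=0$ that you flag as the main obstacle, whereas your version proves the stronger statement that the bundle-dependent inequality holds at every moment of the run.
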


\begin{proof}
    Let $E=\bar{B}_1\sqcup \cdots \sqcup \bar{B}_k$ be the final partition returned by the algorithm for $\tau=\max_{e\in E}w_e$. Fix $j\in[k]$, and denote by $W_j\coloneqq \max_{e\in \bar{B}_j} w_e$. 
    
    The first key observation is that $B_j$ is never involved in a repair step for $\tau\ge W_j$. Suppose not, and consider the last such repair step. Let  $E=B_1\sqcup \cdots \sqcup B_k$ be the partition before this repair step. Let $S$ be the active set, and let $b_i=|B_i\cap S|\ \forall i\in [k]$. Suppose we perform a repair step on $B_{j_1}$ and $B_{j_2}$ where $j\in \{j_1,j_2\}$ and $b_{j_1}-b_{j_2}>1$. Then, after repairing, $b'_{j_1}=b_{j_1}-1>b_{j_2}\ge 0$, and $b'_{j_2}=b_{j_2}+1>0$. Therefore, both $B_{j_1}'$ and $B_{j_2}'$ have nonempty intersection with $S$. Since this is the last repair step involving $B_j$, this implies that $\bar{B}_j\cap S\neq \emptyset$, which is a contradiction to the fact that $\max_{e\in \bar{B}_j} w_e=W_j\le \tau$.

    Another key observation is that $\min_{i\in [k]} w^\tau(B_i)$ is non-decreasing throughout the algorithm. Indeed, when $\tau$ increases, $w^\tau$ is non-decreasing and thus $\min_{i\in [k]} w^\tau(B_i)$ is non-decreasing. When we perform a repair step on $B_i,B_j$ where $b_i-b_j>1$, by \eqref{eq:diff-before} and \eqref{eq:diff-after},
    \[|w^\tau(B'_i)-w^\tau(B'_j)|\le |w^\tau(B_i)-w^\tau(B_j)|.\]
    It follows from the fact that $w^\tau(B'_i)+w^\tau(B'_j)=w^\tau(B_i)+w^\tau(B_j)$ that 
    \[\min\{w^\tau(B'_i), w^\tau(B'_j)\}\ge \min\{w^\tau(B_i), w^\tau(B_j)\}.\] 
    Since the weights of all other bases remain unchanged, $\min_{i\in [k]} w^\tau(B_i)$ is non-decreasing.

    Finally, combine the two observations and \eqref{eq:equitable_leve_tau}. Let $E=B_1\sqcup \cdots \sqcup B_k$ be the partition when $\tau=W_j$. We have
    \[
    \min_{i\in [k]} w(\bar{B}_i)\ge \min_{i\in [k]} w^{W_j} (B_i)\ge w^{W_j}(B_j)-W_j=w(\bar{B}_j)-\max_{e\in \bar{B}_j}w_e,
    \]
    where the first inequality follows from the monotonicity of $\min_{i\in [k]} w^\tau(B_i)$; the second inequality follows from \eqref{eq:equitable_leve_tau}; the equality follows from the first observation and the definition of $W_j$. Let $e\in\bar{B}_j$ be such that $w_e=W_j$. Then $w(\bar{B}_i)\geq w(\bar{B}_j-e)$ for every $i\in[k]$. Since $j$ was arbitrary, the allocation is EF1.
\end{proof}

\section{Weighted degree-bounded bases}
\label{sec:degree_bounded}

The weighted degree-bounded basis theorem proved in this section may be viewed as a matroidal analogue of the Beck--Fiala theorem for discrepancy minimization \cite{beck1981integer}. The original theorem is stated for $x=\frac{1}{2}\cdot \ind$ and $W\in \{0,1\}^{m\times n}$, with the discrepancy bound $\Delta-\frac{1}{2}$. Their proof can be easily generalized to prove the following theorem.

\begin{theorem}[Beck-Fiala]\label{thm:Beck--Fiala}
    Let $W\in \R_{\ge 0}^{m\times n}$ be a matrix such that there are at most $\Delta$ nonzeros per column. Then, for every $x\in [0,1]^n$, there exists $y\in \{0,1\}^n$ such that
    \[
    \inftynorm{Wy-Wx}\le \Delta\cdot\max_{i,j}W_{i,j}.
    \]
\end{theorem}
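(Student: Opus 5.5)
Normalize so that $\max_{i,j}W_{i,j}=1$. If all entries are zero the statement is trivial. I would run the classical Beck--Fiala iterated-rounding argument. Maintain a point $z\in[0,1]^n$ with $z=x$ initially. Call a variable $j$ \emph{floating} if $0<z_j<1$, and call a row $i$ \emph{active} if it contains more than $\Delta$ floating variables in its support, i.e.\ $|\{j \text{ floating}\colon W_{i,j}\neq 0\}|>\Delta$. The invariants are:
\begin{itemize}
\item $(Wz)_i=(Wx)_i$ for every row that is currently active;
\item frozen coordinates (those with $z_j\in\{0,1\}$) never change again.
\end{itemize}

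For the step, let $F$ be the set of floating variables. Counting nonzero pairs $(i,j)$ with $j\in F$ gives at most $\Delta|F|$ pairs. Each active row accounts for more than $\Delta$ of them, so the number of active rows is strictly less than $|F|$.

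The linear system $\sum_{j\in F}W_{i,j}v_j=0$ over the active rows, with $v_j=0$ for $j\notin F$, therefore has a nonzero solution $v$. Move $z$ along $z+\epsilon v$ until some floating coordinate hits $0$ or $1$. This keeps active row sums fixed and freezes at least one more variable; rows can only become inactive, never active again. After at most $n$ steps every variable is integral, and we set $y=z$.

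For the bound, fix a row $i$ and let $T$ be the first moment it becomes inactive, or the end of the process if it never does. Up to time $T$, $(Wz)_i=(Wx)_i$. At time $T$, row $i$ has at most $\Delta$ floating variables in its support. Afterwards only those coordinates change, each by less than $1$ in absolute value, since both the value at time $T$ and the final value lie in $[0,1]$. Each change is weighted by $W_{i,j}\le 1$, so the total change is less than $\Delta$. Hence $|(Wy)_i-(Wx)_i|\le\Delta$. The one subtle point is that a row which is never active from the start is handled by taking $T=0$, and the same estimate covers it.

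The main (mild) obstacle is only the counting step: rows with zero entries on floating variables must not be counted. That is why activity is defined via nonzero entries among floating coordinates. The nonnegativity of $W$ is not actually needed; the argument works for any real matrix with $|W_{i,j}|\le\max|W|$.
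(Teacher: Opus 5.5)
Your proposal is correct and is the classical Beck--Fiala iterated-rounding argument. You preserve the sums of rows with more than $\Delta$ floating variables, and once a row goes inactive you charge at most $\Delta$ entries of size at most $\max_{i,j}W_{i,j}$; this is the proof the paper means when it says the original argument ``can be easily generalized'' (it gives no further details), and your remark that nonnegativity of $W$ is not needed is also right.
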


We consider the matroid-constrained discrepancy minimization. Given $x\in P(M)$ where $P(M)$ is the base polytope of matroid $M$, we show that one can find a basis $B$ of $M$ whose indicator vector $y$ satisfies the guarantee in Theorem~\ref{thm:Beck--Fiala} within a factor of $2$. In fact, we obtain the min-cost version of the same guarantee. 

This can also be viewed as a weighted analogue of a theorem of Kir\'aly, Lau, and Singh \cite{kiraly2008degree}, which proves a slightly stronger bound $(2\Delta-1)$ for the special case where $W\in \{0,1\}^{m\times n}$. Their result is stated for degree-bounded matroid bases in hypergraphs. In that case, $W$ is the incidence matrix of a hypergraph, where every row corresponds to a hyperedge. The column sparsity of $W$ implies the degree of the hypergraph is at most $\Delta$. They prove that given $x\in P(M)$ and costs $c\in \R_{\ge 0}^E$, one can find a basis $B$ of $M$ such that $c(B)\le c^\top x$, and for every hyperedge $\mathscr{E}$, $\left||B\cap \mathscr{E}|-x(\mathscr{E})\right|\le 2\Delta-1$. We prove the following weighted version of degree-bounded bases:



\begin{theorem}\label{thm:weighted_degree_bounded_basis}
    Let $M=(E,\mathcal{I})$ be a matroid, let $W\in \R_{\ge 0}^{m\times E}$ be a matrix such that there are at most $\Delta$ nonzeros per column, and let $c\in\R^E$ be the costs. For every $x\in P(M)$, there is a polynomial time algorithm to compute a basis $B$ of $M$ such that $c(B)\le c^\top x$ and
\[\inftynorm{W\ind_B-Wx}\leq 2\Delta\cdot \max_{i,e} W_{i,e}.\]
\end{theorem}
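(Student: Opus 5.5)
We will follow the iterative relaxation framework of Kir\'aly, Lau, and Singh, adapted to weighted rows in the spirit of Beck--Fiala. Scale so that $\max_{i,e}W_{i,e}=1$. We maintain a set $F\subseteq E$ of fixed-to-one elements, a set of fixed-to-zero elements (deleted), and a set $R\subseteq[m]$ of rows that are still \emph{active}. At each iteration we solve the LP
\[
\min\ c^\top z\quad\text{s.t.}\quad z\in P(M'),\qquad (Wz)_i=(Wx)_i-(W\ind_F)_i\ \ \forall i\in R,
\]
where $M'$ is the current matroid obtained from $M$ by contracting $F$ and deleting the zero-fixed elements. The previous iterate restricted to the remaining elements is feasible, so the cost never increases. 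We take an extreme point $z^*$, delete elements with $z^*_e=0$, and contract elements with $z^*_e=1$. Since the equality rows are kept throughout, the cost satisfies $c(B)\le c^\top x$ at the end.

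\textbf{Key step: the counting argument.} Suppose $z^*$ is an extreme point with $0<z^*_e<1$ for all remaining elements. By the standard uncrossing argument, $z^*$ is determined by a chain $\mathcal{C}$ of tight rank constraints together with a linearly independent subset of the active rows $R$. The number of fractional variables then equals $|\mathcal{C}|+|R'|$ for some $R'\subseteq R$. We claim that some active row $i$ satisfies
\[
\sum_{e\text{ fractional}}W_{i,e}\bigl(1-z^*_e\bigr)\le 2\Delta
\quad\text{or}\quad
\sum_{e}W_{i,e}\,z^*_e\le 2\Delta,
\]
or, more simply, that some active row has at most $2\Delta$ fractional elements in its support. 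In that case we drop the row. Since the entries lie in $[0,1]$, the remaining contribution of that row can change by at most $2\Delta$, which yields the bound $\|W\ind_B-Wx\|_\infty\le2\Delta$.

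\textbf{Main obstacle.} The hard part is the token argument. In KLS, each fractional element $e$ gives $z^*_e$-type tokens to the chain constraints and $1-z^*_e$-type tokens to rows. The chain structure forces consecutive chain sets to differ by at least two fractional elements, since the differences have integer $z^*$-sum with fractional entries. This leaves each element able to donate about $1/2$ to the rows it belongs to. With at most $\Delta$ rows per element, if every active row had more than $2\Delta$ fractional support elements, then each row would receive more than $1$ token, contradicting the count. Weighted entries do not affect this support-size argument, because dropping the row costs at most (support size)$\times\max W$. We would therefore first try to reproduce the KLS counting verbatim, using support size rather than $0/1$ weights. We expect the resulting bound to be $2\Delta$ rather than $2\Delta-1$, because the weighted slack $|(W\ind_B)_i-(Wz^*)_i|\le|\mathrm{supp}|$ loses the integrality rounding that KLS exploit.

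\textbf{Termination and running time.} If no row is dropped, then all variables are integral; otherwise some row is removed. Either way, progress is made, so there are at most $m+|E|$ iterations, each requiring polynomial time via the separation oracle for $P(M)$. Finally, we will verify that rows with only zero-fixed or one-fixed elements remaining are exactly satisfied, so that only dropped rows incur error.
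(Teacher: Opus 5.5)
Your proposal is correct and follows essentially the paper's proof. The paper also uses iterative relaxation: delete or contract the integral coordinates of an extreme point, and drop any row with at most $2\Delta$ remaining nonzeros, which costs at most $2\Delta\cdot\max_{i,e}W_{i,e}$. It rules out a fully fractional extreme point with the same count you sketch: the uncrossed chain has at most $|E|/2$ sets, and $(2\Delta+1)m\le\Delta|E|$ forces $m<|E|/2$. Your token split (half a token per element to the chain, half spread over its at most $\Delta$ rows) is exactly this double count, so you can state it directly rather than hedging on how KLS assign tokens.
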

The proof generalizes the iterative rounding approach in \cite{kiraly2008degree}, which repeatedly fixes integral variables in an extreme-point solution and drops any row containing at most $2\Delta$ remaining variables. A token-counting argument shows that one of these operations is always possible. We delay it to Appendix \ref{sec:proof-degree-bases}.

Theorem \ref{thm:weighted_degree_bounded_basis} generalizes discrepancy rounding results for the generalized assignment problem. Let $I$ be a set of $m$ machines and $J$ be a set of $n$ jobs. Job $j$ has a processing time $d_{ij}$ on machine $i$. Let $G=(I\cup J,E)$ be a bipartite graph. For convenience, we define $x_{ij}=y_{ij}=d_{ij}=0$ for every $(i,j)\notin E$. We say $x\in [0,1]^{E}$ is a \emph{fractional assignment} if $\sum_{i=1}^m x_{ij}=1$ for every $j\in[n]$. We say $y\in\{0,1\}^{E}$ is an \emph{(integral) assignment} if  $\sum_{i=1}^m y_{ij}=1$. The goal is to find an assignment such that the load on each machine is close to the load of the fractional assignment. 
\begin{corollary}\label{cor:LST-2}
    Let $G=(I\cup J,E)$ be a bipartite graph. For every fractional assignment $x\in [0,1]^{E}$, $d\in \R_{\ge 0}^{E}$, and costs $c\in\R^{E}$, there exists an assignment $y\in\{0,1\}^{E}$ such that $c^\top y\leq c^\top x$ and
    \begin{equation}\label{eq:LST-2}
        \left|\sum_{j=1}^nd_{ij}y_{ij}-\sum_{j=1}^nd_{ij}x_{ij}\right|\leq 2\cdot \max_{e\in E} d_e\qquad \forall i\in [m].
    \end{equation}
\end{corollary}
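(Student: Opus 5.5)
We derive Corollary~\ref{cor:LST-2} by specializing Theorem~\ref{thm:weighted_degree_bounded_basis} to a partition matroid. Take the ground set to be the edge set $E$ of $G$. Partition $E$ into the classes $\delta(j)$ for $j\in J$, where $\delta(j)$ is the set of edges incident to job $j$. Let $M$ be the partition matroid on these classes with capacity one in each class. A basis of $M$ picks exactly one edge at every job, so bases correspond exactly to integral assignments $y\in\{0,1\}^E$. Here we assume every job has at least one incident edge, which is implied by the existence of a fractional assignment.

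The base polytope of this partition matroid is
\[
P(M)=\Bigl\{x\in\R_{\ge 0}^E\colon x(\delta(j))=1\ \forall j\in J\Bigr\}.
\]
This follows from the standard description of the base polytope of a direct sum of rank-one uniform matroids. Hence a fractional assignment $x$ is exactly a point of $P(M)$.

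Next we choose the matrix $W\in\R_{\ge 0}^{I\times E}$ with one row per machine. Set $W_{i,e}=d_{ij}$ if $e=(i,j)$, and $W_{i,e}=0$ if $e$ is not incident to machine $i$. Each column, that is, each edge $(i,j)$, has at most one nonzero entry, namely in row $i$. Thus $\Delta=1$.

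Row $i$ of $W\ind_B-Wx$ equals $\sum_j d_{ij}y_{ij}-\sum_j d_{ij}x_{ij}$. Also $\max_{i,e}W_{i,e}=\max_{e\in E}d_e$.

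Applying Theorem~\ref{thm:weighted_degree_bounded_basis} with costs $c$ yields a basis $B$ with $c(B)\le c^\top x$ and
\[
\bigl\|W\ind_B-Wx\bigr\|_\infty\le 2\cdot 1\cdot\max_{e\in E}d_e.
\]
Setting $y=\ind_B$ gives exactly \eqref{eq:LST-2} together with the cost bound.

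None of these steps is hard; the argument is a direct specialization. The only point needing care is confirming that the polytope $P(M)$ of this partition matroid coincides with the set of fractional assignments, and this is the standard base-polytope description of a partition matroid.
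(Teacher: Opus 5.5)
Your proposal is correct and is the paper's proof: it also takes the partition matroid whose classes are the edge sets at each job, encodes the loads in a matrix with one row per machine (so $\Delta=1$), notes that $x\in P(M)$, and applies Theorem~\ref{thm:weighted_degree_bounded_basis}. Your remark that every class is nonempty, so that $P(M)$ is exactly the set of fractional assignments, is a detail the paper leaves implicit.
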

\begin{proof}
    Consider the partition matroid $M=(E,\mathcal{I})$ with the ground set $E$, and partition classes $P_1,\dots,P_n$ where $P_j=\{(i,j)\in E\colon i\in [m]\}$. Let $W\in \R_{\ge 0}^{I\times E}$ be the constraint matrix encoding the processing times. Namely, the $i$-th row and $e$-th column is $d_e$ if edge $e$ is incident to machine $i$, and $0$ otherwise. Note that every edge is only incident to one machine, and thus $W$ has at most $\Delta=1$ nonzeros per column. The fractional assignment $x$ belongs to $P(M)$. Therefore, Theorem \ref{thm:weighted_degree_bounded_basis} gives a basis $B$ of $M$ with incidence vector $y$ such that $c^\top y \leq c^\top x$ and $|\sum_{j=1}^n d_{ij}y_{ij}-\sum_{j=1}^n d_{ij}x_{ij}|\leq 2\cdot\max_{e\in E} d_e$ for every $i\in [m]$.
\end{proof}
A seminal result by Lenstra, Shmoys, and Tardos \cite{lenstra1990approximation} proves a one-sided bound. Namely, for every fractional assignment $x$ and $d$, there exists an integral assignment $y$ such that
\[
    \sum_{j=1}^n d_{ij}y_{ij}-\sum_{j=1}^n d_{ij}x_{ij}\leq \max_{e\in E} d_e\qquad \forall i\in[m].
\]
Their result is stated for an extreme point $x$ of the LP relaxation, but this is without loss of generality. It implies a $2$-approximation algorithm for \emph{makespan minimization}, where the goal is to find an assignment that minimizes the max load of the $m$ machines. Later on, Shmoys and Tardos \cite{shmoys1993approximation}  generalized the above result to the min-cost version. They use an ordered-bucketing approach to reduce to a min-cost perfect matching problem in an auxiliary graph. In fact, it is not hard to generalize their proof to prove a two-sided bound
\begin{equation}\label{eq:LST}
    \left|\sum_{j=1}^n d_{ij}y_{ij}-\sum_{j=1}^n d_{ij}x_{ij}\right|\le \max_{e\in E} d_e\quad\forall i\in [m],
\end{equation}
which is slightly stronger than \eqref{eq:LST-2}. Interestingly, this also follows as a consequence of Budish, Che, Kojima, and Milgrom \cite{budish2013designing} (Theorem 9) in the context of designing allocation mechanisms. They prove that every fractional $x$ can be written as a convex combination of integral assignments $y$ satisfying \eqref{eq:LST}, which implies one such $y$ has cost $c^\top y\le c^\top x$.

\section{Prefix-constrained bases}
\label{sec:prefix}
We propose the matroidal analogue of the prefix GAP Conjecture \ref{conj:prefix_ssuf} on prefix-constrained bases.

\begin{conjecture}[Prefix-constrained bases]\label{conj:matroid_chain}
    Let $M=([n],\cI)$ be a matroid, and let $w\in \R_{\ge 0}^n$ be weights. For every $x\in P(M)$, there exists a basis $B$ of $M$ such that for some universal constant $C>0$,
\begin{equation}\label{eq:prefix-bases}
\left|\sum_{i=1}^tw_i\mathbf{1}\{i\in B\}-\sum_{i=1}^tw_ix_i\right|\leq C\cdot \max_{i\in [n]} w_i\quad \forall t\in [n].
\end{equation}
\end{conjecture}
\begin{lemma}
    Conjecture \ref{conj:matroid_chain} implies Conjecture \ref{conj:prefix_ssuf} up to a factor of $2$.
\end{lemma}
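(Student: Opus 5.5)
The plan is to encode all machines' prefix constraints as a \emph{single} chain of prefixes. Then each machine's prefix error appears as the difference of two prefix errors in the matroid instance, which is where the factor $2$ comes from.

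\textbf{Construction.} Take an instance of Conjecture~\ref{conj:prefix_ssuf}: a bipartite graph $G=(I\cup J,E)$ with $I=\{1,\dots,m\}$ and $J=[n]$, a fractional assignment $x$, and processing times $d$.
\begin{itemize}
\item Let $M=(E,\cI)$ be the partition matroid with classes $P_j=\{(i,j)\in E\}$ for $j\in J$, exactly as in the proof of Corollary~\ref{cor:LST-2}. Its bases are precisely the integral assignments.
\item Since $x(P_j)=1$ for every $j$ and $x\ge 0$, we have $x\in P(M)$.
\item Order the ground set $E$ machine by machine. First list the edges incident to machine $1$ in increasing job index, then the edges incident to machine $2$ in increasing job index, and so on up to machine $m$. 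Each edge is incident to exactly one machine, so this is a linear order of $E$, which we identify with $[|E|]$.
\item Give edge $e$ the weight $w_e=d_e$, so that $\max_e w_e=\max_{e\in E}d_e$.
\end{itemize}

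\textbf{Applying the conjecture.} Conjecture~\ref{conj:matroid_chain} gives a basis $B$ with incidence vector $y$ such that every prefix $T$ of this order satisfies
\[
\Bigl|\sum_{e\in T}d_e(y_e-x_e)\Bigr|\le C\max_e d_e .
\]

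\textbf{Extracting machine prefixes.} Fix a machine $i$ and a job index $t\in[n]$. Let $T_0$ be the prefix consisting of all edges of machines $1,\dots,i-1$; if $i=1$ this is empty, with error $0$. Let $T_1$ be $T_0$ together with the edges $(i,j)$ with $j\le t$. By the chosen order, $T_1$ is also a prefix. Since $x_{ij}=y_{ij}=d_{ij}=0$ for non-edges,
\[
\sum_{j=1}^t d_{ij}(y_{ij}-x_{ij})=\sum_{e\in T_1}d_e(y_e-x_e)-\sum_{e\in T_0}d_e(y_e-x_e).
\]
By the triangle inequality this is at most $2C\max_e d_e$ in absolute value. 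So Conjecture~\ref{conj:prefix_ssuf} holds with constant $2C$.

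The only real obstacle is seeing that one chain can serve all machines simultaneously. Ordering the edges in consecutive machine blocks handles this, because every per-machine prefix is then a difference of two global prefixes. The remaining checks (that $x\in P(M)$, and that bases of $M$ are exactly assignments) are routine.
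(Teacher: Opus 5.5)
Your proposal is correct and follows essentially the same argument as the paper. Both use the partition matroid with classes $P_j$ and order the edges machine by machine and then by job index, so that each per-machine job prefix becomes an interval; writing that interval as a difference of two global prefixes and applying the triangle inequality gives the constant $2C$.
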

\begin{proof}
    Let $G=(I\cup J,E)$ be a bipartite graph with $|I|=m$ and $|J|=n$. We take the partition matroid $M$ on the ground set $E$ with partition classes $P_1,\dots,P_n$ where $P_j=\{(i,j)\in E\colon i\in [m]\}$. The elements in the ground set $E$ are ordered lexicographically by $\{(i,j)\in E:i\in [m], j\in [n]\}$. In other words, for every machine $i$, the edges $E_i\coloneqq \{(i,j)\in E:j\in [n]\}$ incident to it appear consecutively in the ordering, which are then ordered locally by $j$. Therefore, for every $t\in [n]$, the prefix of jobs $E_i^t\coloneqq \{(i,j)\in E:j\in [t]\}$ incident to machine $i$ appears as an interval in the ordering. Note that $x\in P(M)$. Conjecture \ref{conj:matroid_chain} implies for every $1\le a\le b\le |E|$,
$$\left|\sum_{i=a}^bw_i\Big(\mathbf{1}\{i\in B\}-x_i\Big)\right|\le \left|\sum_{i=1}^{a-1}w_i\Big(\mathbf{1}\{i\in B\}-x_i\Big)\right|+\left|\sum_{i=1}^bw_i\Big(\mathbf{1}\{i\in B\}-x_i\Big)\right| \leq 2C\cdot \max_{e\in E} w_e.$$
    Applying Conjecture \ref{conj:matroid_chain} to matroid $M$ and weights $w\coloneqq d\in \R_{\ge 0}^E$, we obtain a basis $B$ whose indicator vector $y$ defines an assignment. Choosing the interval $[a,b]=E_i^t=\{(i,j)\in E:j\in [t]\}$, we obtain
    \[
    \left|\sum_{j=1}^t d_{ij}y_{ij}-\sum_{j=1}^t d_{ij}x_{ij}\right| \leq 2C\cdot \max_{e\in E} d_e. \qedhere
    \]
\end{proof}

We also consider the following conjecture on cost-augmented prefix-constrained bases.

\begin{conjecture}\label{conj:matroid_chain-cost}
    Let $M=([n],\cI)$ be a matroid, let $w\in \R_{\ge 0}^n$ be weights, and let $c\in\R_{\ge 0}^n$ be costs. For every $x\in P(M)$, there exists a basis $B$ of $M$ such that $c(B)\leq c^\top x$, and for some universal constant $C>0$, $$\left|\sum_{i=1}^tw_i\mathbf{1}\{i\in B\}-\sum_{i=1}^tw_ix_i\right|\leq C\cdot \max_{e\in[n]} w_e\quad \forall t\in [n].$$
\end{conjecture}

Swamy, Traub, Koch, and Zenklusen \cite{swamy2026unsplittable} developed a general framework to translate discrepancy rounding results to their cost-augmented version. Using their framework, we obtain the following lemma; its proof is in Appendix \ref{sec:prefix-cost-proof}.
\begin{lemma}\label{lem:prefix-cost}
    Conjecture \ref{conj:matroid_chain} implies Conjecture \ref{conj:matroid_chain-cost} up to a factor of $2$.
\end{lemma}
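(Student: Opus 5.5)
The plan is to follow the face-preserving rounding framework of Swamy, Traub, Koch, and Zenklusen~\cite{swamy2026unsplittable}. Fix $M$, $w$, $c$ and $x\in P(M)$, and normalize so that $\max_e w_e=1$. Assume Conjecture~\ref{conj:matroid_chain} holds with constant $C$.

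\textbf{Step 1: the polytope of good bases.} Let $Q$ be the convex hull of the characteristic vectors of those bases $B$ that satisfy \eqref{eq:prefix-bases} with error $2C$. It suffices to show $x\in Q$. Indeed, $x$ is then a convex combination of such bases, and one of them has $c(B)\le c^\top x$. Finding a cheap basis is thus reduced to a pure membership statement, which no longer involves $c$.

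\textbf{Step 2: induction on the dimension of the minimal face.} Let $F$ be the minimal face of $P(M)$ containing $x$. We prove the stronger claim that $x$ lies in the convex hull of vertices of $F$ satisfying the $2C$ bound. Suppose not. Separate $x$ from the convex hull of the good vertices of $F$ by a cost vector $c'$, so that every good vertex $v$ of $F$ has $c'^\top v>c'^\top x$.

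\textbf{Step 3: the ray argument.} Since the good vertices are not all of $F$'s vertices in a useful way, write $x$ using the rounding supplied by the conjecture on $F$. Faces of the base polytope are base polytopes of direct sums of minors, given by chains of tight sets. Applying the conjecture to the matroid of $F$, with the same ordering and weights, gives a vertex $v$ of $F$ within error $C$ of $x$ on every prefix.

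Now consider the ray from $v$ through $x$, and let $z=x+\mu(x-v)$ be the point where it exits $F$, with $\mu\ge 0$. Then $z$ lies on a lower-dimensional face $F'\subseteq F$, and by induction $z$ is in the convex hull of vertices of $F'$ that are within error $C$ of $z$.

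\textbf{Step 4: combining the two roundings.} Here we write $x=\frac{\mu}{1+\mu}v+\frac{1}{1+\mu}z$. Each prefix error of a vertex $u$ of $F'$ relative to $x$ is then
\[
(u-x)=\Bigl(\tfrac{\mu}{1+\mu}\Bigr)(u-v)+\Bigl(\tfrac{1}{1+\mu}\Bigr)(u-z)\ldots
\]
This is the delicate point. The naive bound adds the errors of $v$ and of $u$ and gives $2C$, but $u$ must be compared with $x$ directly.

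The cleaner route is to use the triangle inequality $|u-x|\le|u-z|+|z-x|$ on prefixes. Here $z-x=\mu(x-v)$ is problematic when $\mu$ is large, so instead we choose the decomposition differently. Apply the conjecture at $x$ to obtain $v$, and note that both $v$ and the vertices from the induction are compared with their own fractional points. The convex combination is then taken over $v$ (error $\le C$ from $x$) and the vertices $u$ of $F'$.

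To bound $u$ against $x$, we write the prefix sums of $u-x$ as those of $(u-z)+(z-x)$. We then use the fact that $z$ and $x$ differ along direction $x-v$, whose prefix sums are bounded by $C$ times a factor. Controlling this factor is the main obstacle.

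\textbf{Step 5: fixing the obstacle.} To handle it, we do not apply the induction at the exit point $z$ itself. Following Swamy et al., we instead apply the conjecture at every point and use the property that the set of vectors within prefix-error $C$ of a point is convex. With this, the $2C$ bound arises as error $C$ from the rounding at $x$ plus error $C$ from the rounding at the face point.

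The separation vector $c'$ finally yields a contradiction. The rounding at the point minimizing $c'$ over the relevant segment can be taken to lie on the cheaper side, because the inductive claim on the lower face gives good vertices $u$ with $c'^\top u\le c'^\top z$, and $v$ is good. Since $x$ is a convex combination of $v$ and $z$, at least one of $c'^\top v$ and $c'^\top z$ is at most $c'^\top x$, which contradicts the separation.
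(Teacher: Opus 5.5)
Your Step~1 reduction (it suffices to show that $x$ lies in the convex hull of bases within prefix error $2C$ of $x$) is fine, and so is your use of face matroids in Step~3. The proof has a genuine gap exactly where you flag it in Step~4, and Step~5 does not close it.

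In the final contradiction, the separating vector $c'$ only constrains vertices within prefix error $2C$ of $x$. The induction hypothesis at the exit point $z$ gives vertices $u$ of $F'$ with $(c')^\top u\le (c')^\top z<(c')^\top x$. But these $u$ are within $2C$ of $z$, not of $x$. (In Step~3 you even write error $C$ at $z$, which your inductive claim does not provide.) Since $z-x=\mu(x-v)$ with $\mu$ unbounded, the prefix sums of $u-x$ are not controlled. So the $u$'s need not be \emph{good}, and nothing contradicts the separation. An induction of this shape re-centers at a new point on a lower face at every step, so it inherently loses track of the distance to the original $x$. The sentence ``apply the conjecture at every point and use convexity'' names an ingredient but is not an argument.

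The paper does not reprove the framework. It applies Conjecture~\ref{conj:matroid_chain} to the matroid $N$ with $P(N)=F$, which gives a face-preserving rounding with the centrally symmetric error body $R$. It then invokes Theorem~\ref{thm:cost-preserving} of Swamy, Traub, Koch, and Zenklusen as a black box, giving $c^\top y\le c^\top x$ and $y-x\in R-R=2R$.

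If you want a self-contained proof, the missing idea is to compare with a single auxiliary point rather than descend along rays. Suppose there are vertices $v_1,\dots,v_k$ of $F$ with $x\in\mathrm{conv}\{v_1,\dots,v_k\}$ and a point $x^\dagger$ with $v_i-x^\dagger\in R$ for all $i$. Convexity gives $x-x^\dagger\in R$, hence $v_i-x\in R-R=2R$ for every $i$, and some $v_i$ has cost at most $c^\top x$. Such an $x^\dagger$ should exist by a KKM-type covering theorem for polytopes (e.g.\ Komiya's). Face-preservation says precisely that the closed sets $\{x'\in F\colon v-x'\in R\}$, with $v$ ranging over the vertices of a face $\tau$, cover $\tau$. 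This is a topological fact, not something a one-dimensional ray step delivers.
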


We give an algorithm that proves a bound of $C=O(\log n)$ for Conjecture \ref{conj:matroid_chain-cost}. The algorithm iteratively solves an LP relaxation that drops a constant fraction of the prefix constraints. We show that every iteration finds a partial coloring where a constant fraction of the variables are fixed. After $O(\log n)$ iterations, the algorithm terminates, and the discrepancy bound follows.
\begin{theorem}
    Let $M=([n],\cI)$ be a matroid, let weights $w\in \R_{\ge 0}^n$, and let costs $c\in \R_{\ge 0}^n$. For every $x\in P(M)$, there exists a basis $B$ of $M$ such that $c(B)\le c^\top x$ and
\[
\left|\sum_{i=1}^tw_i\mathbf{1}\{i\in B\}-\sum_{i=1}^tw_ix_i\right|\leq O(\log n)\cdot \max_{e\in [n]} w_e\quad \forall t\in [n].
\]
\end{theorem}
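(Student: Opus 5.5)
We prove the statement by iterated partial coloring. Scale so that $\max_e w_e=1$. We maintain a fractional point $z\in P(M)$, initially $z=x$, together with the set $F=\{e\colon 0<z_e<1\}$ of fractional elements. By contracting the elements with $z_e=1$ and deleting those with $z_e=0$, the point $z|_F$ lies in the base polytope of the minor $M'$ on $F$. In each round we solve the linear program
\[
\min\ c^\top z' \quad\text{s.t.}\quad z'\in P(M'),\quad \sum_{e\in F\cap I_\ell} w_e z'_e=\sum_{e\in F\cap I_\ell} w_e z_e\quad \forall \ell .
\]
Here $I_1,\dots,I_q$ are consecutive blocks (intervals in the order $[n]$) of the current fractional elements $F$. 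We form them by walking through $F$ in order and cutting a new block every time $K$ fractional elements have accumulated, for a large constant $K$ (say $K=4$ or $8$). Thus $q\le |F|/K$. The current $z$ is feasible, so the optimum $z'$ has $c^\top z'\le c^\top z$. We take $z'$ to be an extreme point.

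\textbf{Counting step.} An extreme point of $P(M')$ intersected with $q$ extra equality constraints lies on a face of $P(M')$ of dimension at most $q$. We will use the standard fact that a point in a face of dimension $q$ of a matroid base polytope has support whose fractional part satisfies a laminar or chain tightness argument. Concretely, the tight sets $\{S\colon z'(S)=r(S)\}$ can be uncrossed to a chain $\mathcal C$. The fractional elements then number at most $2(|\mathcal C_{\text{fractional}}|+q)$, since each chain difference containing a fractional element contains at least two of them and the rank condition gives linear independence. This yields at most $2q\le 2|F|/K$ fractional coordinates. Hence with $K\ge 4$, at least half of the fractional variables become integral each round, and after $O(\log n)$ rounds $z$ is integral, i.e., the indicator of a basis $B$ with $c(B)\le c^\top x$.

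\textbf{Error analysis.} Fix a prefix $[t]$. In a given round, the blocks entirely inside $[t]$ have their weighted sums preserved exactly. Only the single block straddling $t$ contributes error. That error is at most $\sum_{e\in I_\ell} w_e|z'_e-z_e|\le |I_\ell|\le K$. An alternative is to cut blocks by weight mass rather than cardinality, but cardinality suffices since $w\le 1$. Summing over $O(\log n)$ rounds gives prefix discrepancy $O(K\log n)=O(\log n)$.

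\textbf{Main obstacle.} The main obstacle is the extreme-point counting: showing that an extreme point of the matroid base polytope with $q$ additional linear equalities has at most $O(q)$ fractional coordinates. I would prove this via the uncrossing argument of Király--Lau--Singh used for Theorem~\ref{thm:weighted_degree_bounded_basis}. The tight matroid constraints form a chain $S_1\subsetneq\cdots\subsetneq S_p$ whose characteristic vectors, together with the $q$ block equalities, span the space restricted to fractional coordinates. Each $S_{i}\setminus S_{i-1}$ with a fractional element must contain at least two fractional elements, because $z'(S_i\setminus S_{i-1})$ is an integer. This gives $p\le f/2$. Combined with $f\le p+q$, we get $f\le 2q$.
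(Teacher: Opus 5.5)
Your proposal is correct and follows essentially the paper's route. Both proofs re-solve the base-polytope LP keeping only every $K$-th prefix (block-boundary) equality, use chain uncrossing to show an extreme point fixes a constant fraction of the remaining variables, and pay $O(K)\cdot\max_e w_e$ per round over $O(\log n)$ rounds. Your count projects the uncrossed chain onto the fractional coordinates to get $f\le 2q$, so at least half the variables are fixed versus the paper's $n/6$; this is valid and slightly sharper. Two small corrections: the inequality in your ``Counting step'' paragraph should read $f\le |\mathcal C_{\mathrm{frac}}|+q$ with $|\mathcal C_{\mathrm{frac}}|\le f/2$, as you derive correctly at the end, and the trailing partial block should be left unconstrained so that $q\le |F|/K$ actually holds.
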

\begin{proof}

   Let $b_t\coloneqq \sum_{i=1}^t w_ix_i$. Consider the natural LP relaxation for min-cost prefix-constrained bases, on top of which we further relax by keeping every fourth prefix constraint:
   \begin{equation}\label{eq:prefix_matroid-log}
        \begin{aligned}
            \min&\ \sum_{e\in [n]} c_ey_e\\
            s.t.\ 
            &y(S)\leq r(S)\quad &\forall S\subseteq [n]\\
            &y([n])=r([n])\\
            &\sum_{i=1}^tw_iy_i=b_t\quad&\forall t\in [n],\ t=4k\text{ for some }k\in \Z\\
            &0\leq y_e\leq 1\quad &\forall e\in [n].
        \end{aligned}
    \end{equation}
    We claim that the optimal extreme point $y^*$ of this relaxation has at least $\frac{n}{6}$ integral coordinates. To see this, 
    let $I\subseteq [n]$ be the integral coordinates and $F=[n]\setminus I$ be the fractional coordinates of $y^*$. Suppose $|I|=m$. Let $\mathcal{T}=\{S\subseteq [n]\colon y^*(S)=r(S)\}$ be the collection of all tight sets from the matroid constraints. By a standard uncrossing argument, there is a chain $\mathcal{L}=\{S_1,\dots,S_\ell\}$ where $S_1\subsetneq\cdots\subsetneq S_\ell$ such that $\spa(\{\chi_S\colon S\in\mathcal{L}\})=\spa(\{\chi_S\colon S\in\mathcal{T}\})$. Set $S_0\coloneqq\emptyset$. For every $h\in[\ell]$, one has $y^*(S_h\setminus S_{h-1})=y^*(S_h)-y^*(S_{h-1})=r(S_h)-r(S_{h-1})\in\Z$. If $(S_h\setminus S_{h-1})\cap F\neq\emptyset$, there is at least one $e\in S_h\setminus S_{h-1}$ with $0<y^*_e<1$, and then $y^*(S_h\setminus S_{h-1})>0$; hence, $y^*(S_h\setminus S_{h-1})\geq 1$. The fact that $y_e^*<1$ further implies $|S_h\setminus S_{h-1}|\geq 2$. Otherwise, $|S_h\setminus S_{h-1}|\geq 1$. Therefore, 
    \[
    n=\sum_{h=1}^\ell |S_h\setminus S_{h-1}|\ge 2\ell-\sum_{h=1}^\ell \ind\{(S_h\setminus S_{h-1})\cap F=\emptyset\}\ge 2\ell-\sum_{h=1}^\ell \ind\{(S_h\setminus S_{h-1})\cap I\neq\emptyset\}\ge 2\ell-m.
    \]  
    Therefore, the number of independent tight constraints is at most 
    \[
    \ell+\frac{n}{4}+m\le \frac{m+n}{2}+\frac{n}{4}+m=\frac{3m}{2}+\frac{3n}{4},
    \] 
    which consists of $\ell$ tight matroid constraints from $\cal{L}$, $\frac{n}{4}$ prefix constraints, and $m$ tight box constraints $0\le y_e\le 1$ for $e\in I$. Since $y^*$ is an extreme point, the number of independent tight constraints is at least $n$. Therefore,
    \[
    n\le \frac{3m}{2}+\frac{3n}{4},
    \]
    which implies $m\ge \frac{n}{6}$.

    The algorithm iteratively solves \eqref{eq:prefix_matroid-log} to obtain an extreme point $y^*$ and fix all variables with $y^*_e\in\{0,1\}$. In the matroid, it is equivalent to contracting elements with $y^*_e=1$ and deleting elements with $y^*_e=0$. We update the ground set by renumbering the unfixed elements in the same order and update $b_t\gets \sum_{i=1}^t w_iy^*_i$; solve the updated LP relaxation. Since every iteration fixes at least a $\frac{1}{6}$ fraction of the variables, the number of variables reduces by a constant factor. The algorithm terminates in $O(\log n)$ iterations. In the end we obtain an integral solution corresponding to a basis $B$ of $M$. The cost guarantee follows because we only relax the LP. 
    
    To prove the discrepancy bound, fix an arbitrary prefix $t$ and an iteration $j$. Denote by $y^{(j)}$ the optimal solution in iteration $j$. We claim
    \begin{equation}\label{eq:log-two-iteration}
    \left|\sum_{i=1}^t w_iy^{(j)}_i-\sum_{i=1}^t w_iy^{(j-1)}_i\right|\le 6\cdot \max_{e\in [n]} w_e.
    \end{equation}
    To see this, let $b^{(j-1)}_t\coloneqq \sum_{i=1}^t w_iy^{(j-1)}_i$. Then the $j$-th iteration solves the LP \eqref{eq:prefix_matroid-log} with the right-hand side $b^{(j-1)}_t$. Since we keep every one out of four consecutive prefix constraints in the LP, there is a $t'$ whose constraint is kept and $|t'-t|\le 3$. Therefore,

    \[
        \left|\sum_{i=1}^t w_iy^{(j)}_i-\sum_{i=1}^t w_iy^{(j-1)}_i\right|=\left|\Big(\sum_{i=1}^t w_iy^{(j)}_i-\sum_{i=1}^{t'} w_iy^{(j)}_i\Big)+\Big(\sum_{i=1}^t w_iy^{(j-1)}_i-\sum_{i=1}^{t'} w_iy^{(j-1)}_i\Big)\right|\le 6\cdot \max_{e\in [n]} w_e,
    \]
    where the first inequality follows from the fact that $\sum_{i=1}^{t'} w_iy^{(j)}_i=\sum_{i=1}^{t'} w_iy^{(j-1)}_i$; the second inequality follows from the fact that $|t-t'|\le 3$ and $0\le y_i\le 1$. This proves the claimed bound \eqref{eq:log-two-iteration} between two consecutive iterations. Summing over $O(\log n)$ iterations gives 
\[
\left|\sum_{i=1}^tw_i\mathbf{1}\{i\in B\}-\sum_{i=1}^tw_ix_i\right|\leq O(\log n)\cdot \max_{e\in [n]} w_e,
\]
as claimed.
\end{proof}

Below we show that Conjecture \ref{conj:matroid_chain-cost} is true with $C=1$ for two special cases: for binary weights $w\in \{0,1\}^n$ or for uniform matroids. Their proofs can be found in Appendices \ref{sec:proof-binary} and \ref{sec:proof-uniform}, respectively.
\begin{prop}\label{prop:binary-weights}
    Conjecture \ref{conj:matroid_chain-cost} is true with $C=1$ for $w\in \{0,1\}^n$.
\end{prop}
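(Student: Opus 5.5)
The plan is to use the hint from the introduction: total unimodularity of a system formed by two chains. Let $w\in\{0,1\}^n$ and put $T\coloneqq\{i\colon w_i=1\}$. Then the prefix sums are $\sum_{i\le t} w_i y_i=y(T\cap[t])$, and the sets $T_t\coloneqq T\cap[t]$ form a chain $\mathcal{C}_1$. Write $b_t\coloneqq x(T_t)$. I would define the polytope
\[
Q\coloneqq\{y\in P(M)\colon \lfloor b_t\rfloor\le y(T_t)\le\lceil b_t\rceil\ \forall t\in[n]\}.
\]
Clearly $x\in Q$, so $Q$ is nonempty. If $Q$ is integral, then minimizing $c^\top y$ over $Q$ returns a vertex $\ind_B$ with $c(B)\le c^\top x$. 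That vertex satisfies $|B\cap T_t|-x(T_t)|\le 1$ for every $t$, which is exactly \eqref{eq:prefix-bases} with $C=1$ whenever $\max_i w_i=1$. If instead $w\equiv0$, the statement is trivial: just take a min-cost basis, whose cost is at most $c^\top x$.

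The main step is integrality of $Q$, and I would prove it via an extreme-point argument. Fix a vertex $y$ of $Q$. The tight matroid constraints $y(S)=r(S)$ uncross, by the standard argument also used in the proof of the $O(\log n)$ theorem, to a chain $\mathcal{L}$. This chain $\mathcal{L}$, together with the tight constraints $y_e\in\{0,1\}$, spans all tight matroid and box constraints. So $y$ is the unique solution of a system whose rows are indicator vectors of sets in $\mathcal{L}$, of sets in $\mathcal{C}_1$, and of singletons, with integral right-hand sides $r(S)$, $\lfloor b_t\rfloor$ or $\lceil b_t\rceil$, and $0/1$. Singletons can be absorbed by fixing those coordinates. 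What remains is a matrix whose rows are the indicator vectors of the union of two chains (laminar families) on the ground set. Such a matrix is totally unimodular: it is the classical fact that the union of two laminar families gives a TU matrix, since it is a network matrix, or equivalently by the Ghouila-Houri criterion applied to the rows of each chain with alternating signs. Cramer's rule then forces $y$ to be integral.

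The step needing most care is the uncrossing justification. The family $\mathcal{L}$ must be taken only from tight matroid constraints, and the prefix constraints must not be mixed in. We only need a basis of the tight constraint space whose rows are $\mathcal{L}\cup(\text{tight members of }\mathcal{C}_1)\cup(\text{singletons})$. This holds because the span of the tight matroid rows equals the span of $\mathcal{L}$; adding the other tight rows then gives the full tight system, and we may pick a linearly independent subsystem from this union. So the argument goes through. Finally, an integral point of $P(M)$ is the indicator of a basis, which completes the proof. Algorithmically, the LP over $Q$ can be solved with a separation oracle for $P(M)$, but existence suffices for the proposition.
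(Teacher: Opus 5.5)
Your proposal is correct and follows essentially the same route as the paper. The paper also takes an optimal extreme point of $P(M)$ intersected with the floor/ceiling prefix constraints and fixes its integral coordinates. It then uncrosses the tight matroid constraints into a chain and concludes integrality from the total unimodularity of the two-chain system, with $c(B)\le c^\top x$ following from feasibility of $x$.
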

\begin{theorem}\label{thm:uniform}
    Let $M=([n],\cI)$ be a uniform matroid and let weights $w\in \R_{\ge 0}^n$. For every $x\in P(M)$, there exists a basis $B$ of $M$ such that $$\left|\sum_{i=1}^tw_i\mathbf{1}\{i\in B\}-\sum_{i=1}^tw_ix_i\right|\leq \max_{e\in [n]} w_e\quad \forall t\in [n].$$
\end{theorem}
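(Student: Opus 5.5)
Since $M$ is uniform of rank $r$, the base polytope is $P(M)=\{x\in[0,1]^n\colon x([n])=r\}$, and a basis is any $r$-subset of $[n]$. The plan is a direct sequential rounding that processes $1,\dots,n$ in order. We maintain the weighted prefix discrepancy $D_t\coloneqq\sum_{i\le t}w_i(\mathbf 1\{i\in B\}-x_i)$ and the cardinality discrepancy $K_t\coloneqq|B\cap[t]|-x([t])$. After scaling, assume $\max_e w_e=1$. The goal is to keep $|D_t|\le 1$ for all $t$, while ensuring at the end that $K_n=0$, so that $|B|=r$.

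The natural greedy rule is to include $i$ iff $D_{i-1}-w_ix_i+w_i\le$ the alternative in absolute value. This controls $D$ but not the cardinality, and $K_n\ne0$ is the real danger. I would therefore not use a pure greedy. Instead, I would reduce to a cleaner structure and pair elements. The first step is to iteratively round pairs. Take two consecutive fractional coordinates $i<j$ (consecutive among the fractional ones) and move along $x+\varepsilon(e_i/w_i\cdot\,?)$. More concretely, move along the direction $e_i-e_j$, which preserves $x([n])$ and hence membership in $P(M)$. This changes only the prefix sums for $t\in[i,j)$, by $\varepsilon w_i$. We push until one of the two coordinates becomes integral, choosing the sign of $\varepsilon$ so that the prefix discrepancy on $[i,j)$ stays small.

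The cleanest implementation is a left-to-right scheme in the spirit of pipage rounding. Keep a single "carry" fractional element $f$ with value $x_f\in(0,1)$, and combine it with the next fractional element $g$. If $x_f+x_g\le1$, transfer so that one of them becomes $0$ and the other has value $x_f+x_g$. Otherwise transfer so that one becomes $1$ and the other has value $x_f+x_g-1$. Which one becomes integral is the free choice, and it must be made to keep $|D_t|\le1$.

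The bookkeeping works as follows. At any moment all processed elements except the carry $f$ are integral. The prefix discrepancy for every $t$ beyond the last fixed element equals, up to the carry, the partial total, and the carry is later settled. The key observation is this: for $t\ge$ the position of the surviving carry, the prefix sum $\sum_{i\le t}w_iy_i$ equals $\sum_{i\le t}w_ix_i$ plus a term bounded by the weight of one element times a fractional deviation in $[-1,1]$. The final step handles the last carry. Since $x([n])=r$ is an integer, the last carry is automatically integral.

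The main obstacle is choosing which element absorbs the mass at each merge. The goal is that for every $t$, the accumulated error is a single term $w_f(y_f-x_f)$-type expression rather than a sum of many such terms. The attempt is to show inductively that the signed discrepancy $D_t$ has the form $w_{f}\delta$ with $\delta\in[-1,1]$. When the carry moves from $f$ to $g$, the new discrepancy combines $w_f$- and $w_g$-terms. Here I would choose to keep as carry whichever element makes the combined error lie in $[-\max(w_f,w_g),\max(w_f,w_g)]$. This is a one-dimensional case check using the monotone relation between the two options, since the two choices produce errors of opposite sign.

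If this invariant fails, I would fall back on the following. The feasible set $\{y\in P(M)\colon \sum_{i\le t}w_iy_i\in[b_t-1,b_t+1]\}$ for $w_i\le1$ is the intersection of a box, a cardinality equality, and an interval-type constraint system. One would then try to show that this system has an integral vertex, via a consecutive-ones or network-matrix argument after substituting $z_t=$ prefix sums.
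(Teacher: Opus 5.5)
There is a genuine gap, and it sits exactly at the step you call the key observation. Moving by $\varepsilon$ along $e_i-e_j$ with $i<j$ does not change ``only the prefix sums for $t\in[i,j)$''. It changes $\sum_{s\le t}w_sy_s$ by $\varepsilon w_i$ for $t\in[i,j)$, and also by $\varepsilon(w_i-w_j)$ for every $t\ge j$.

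So each cardinality-preserving merge leaves a permanent offset on all later weighted prefixes. These offsets add up over the successive merges. On top of them, the eventual rounding of the surviving carry $f$ shifts every prefix $t\ge f$ by a further $w_f(y_f-\tilde x_f)$.

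Your single-term invariant $D_t=w_f\delta$ cannot be maintained in general: if $w_f=1$, $w_g=0$ and $g$ becomes the carry, the offset $-\tilde x_f$ is not a multiple of $w_g$. The rule you retreat to is to pick the carry so that the \emph{current} combined error is at most $\max(w_f,w_g)$. That controls one merge only. It says nothing about the offsets created by later merges, about the intermediate prefixes $t\in[f,g)$, or about the carry's final rounding, which by itself can add up to $w_f$ more.

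Note also that every move preserves cardinality, so your scheme can only output bases with $|B\cap[t]|\in\{\lfloor x([t])\rfloor,\lceil x([t])\rceil\}$ for all $t$. You are therefore implicitly trying to prove a stronger two-row prefix statement, controlling unit weights and $w$ simultaneously, and nothing in the sketch supports it.

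The fallback does not close the gap either. With real weights and right-hand sides $b_t\pm 1$, that polytope has fractional vertices. Already for $w=(1,1,0,0)$, $x=(0.1,0.1,0.9,0.9)$, $r=2$, the point $(1,0.2,0.8,0)$ is one. So no unimodularity or network argument yields integrality. Moreover, every $0/1$ point of the polytope is a vertex, so ``it has an integral vertex'' is just a restatement of the theorem.

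The missing idea is to swap the roles of the two quantities. The paper merges the two fractional coordinates $p<t$ along the segment $\{(u,v)\in[0,1]^2\colon w_pu+w_tv=\mathrm{const}\}$, so every weighted prefix sum is preserved \emph{exactly}. Hence, from time $t$ on, the only coordinate of $[t]$ that still changes is the single fractional one $p$. The final error on prefix $t$ is therefore $w_p\,|\mathbf 1\{p\in B\}-x^{(t)}_p|\le w_p$.

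What this gives up is cardinality, but cardinality is cheap to control because it must only be an integer at the end. The two endpoints of the segment have coordinate sums differing by at most $1$, since $u_1-u_2$ and $v_1-v_2$ have opposite signs. So one of them keeps $\sum_i x^{(t)}_i\in[r,r+1)$. At the end there is at most one fractional coordinate, and integrality of $r$ then forces exactly $r$ ones.
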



\section{Single-source unsplittable flows and counterexamples}
\label{sec:unsplittable}

We disprove Conjecture~\ref{conj:weighted-carpool}, and hence also Conjecture~\ref{conj:ssuf}. We begin with a simpler counterexample directly to Conjecture~\ref{conj:ssuf}.

\begin{theorem}\label{thm:counterexample-MS}
For each $k \ge 4$, the construction in Figure~\ref{fig:unsplittable-flow-counterexample} is a counterexample to Conjecture~\ref{conj:ssuf} with $\max_{j\in [n]}d_j=8k$, and for every  unsplittable flow $y$,
\[
 \max_a |x_a-y_a|\ge 9k-3
 =\left(\frac{9}{8} -\frac{3}{8k}\right)\cdot \max_{j\in [n]}d_j.
\]
\end{theorem}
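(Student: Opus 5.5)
The figure with the construction is not included in the text I have, so I cannot reproduce the exact gadget. The plan below is the argument I would run once the figure is fixed. The verification has three parts: that $x$ is a valid flow, a reduction of the possible unsplittable flows, and a case analysis.

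\textbf{Feasibility of $x$.} First I would check that the flow $x$ in Figure~\ref{fig:unsplittable-flow-counterexample} is a valid flow for the stated demands. That is, flow conservation holds at every non-terminal vertex, and the in-flow at each $t_j$ equals $d_j$. I would also confirm that the largest demand is $8k$. Since the graph is acyclic and finite, this is a direct local check at each vertex.

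\textbf{Reducing the search space.} Next I would fix an arbitrary unsplittable flow $y$ and suppose, for contradiction, that $|x_a-y_a|\le 9k-4$ for every arc $a$. The aim is to turn these arc bounds into combinatorial restrictions on the paths $P_j$.

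Arcs carrying small fractional flow (say $x_a<k$) have slack only $9k-4$. So any such arc can carry at most one large commodity, or at most a bounded set of small ones.

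Arcs carrying large fractional flow force a lower bound: $y_a\ge x_a-(9k-4)$. This forces certain large commodities through them.

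Integrality then constrains things further. The demands are integers, presumably multiples of $k$ plus small offsets, and this is where the $-3$ in $9k-3$ comes from. So $y_a$ can take only a few values near $x_a$.

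I would list, for each terminal, its entering arcs, and for each decision vertex, which commodities can pass through it. This reduces the choices for $y$ to a small number of routing patterns.

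\textbf{Case analysis.} The core step is the case split over the routing of the largest commodity (demand $8k$), and then of the next-largest ones. In each case I would pin the violated arc by a counting argument. Summing the demands routed through a cut, compared with the fractional flow across that cut, should give a deficit or excess of at least $9k-3$ on some single arc. The intuition is that routing the big demand one way overloads one arc by about $8k$ beyond its fractional share. Compensating for this forces another commodity of about $k$ onto an already saturated arc. Together these give $8k+k-3$.

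The requirement $k\ge4$ is presumably needed to rule out degenerate cases, where small demands could pack exactly to compensate. I would track where this inequality is used.

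\textbf{Main obstacle.} I expect the hardest part to be organising the case analysis so that no routing is missed. This is especially delicate for small commodities that could split the imbalance between two arcs. To handle it, I would first prove a lemma that the small commodities can only shift loads by amounts smaller than $k$. This reduces everything to the placement of the few large commodities.

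Finally, dividing $9k-3$ by $8k$ gives the stated ratio $\frac{9}{8}-\frac{3}{8k}$.
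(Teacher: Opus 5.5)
\textbf{The gap.} Your proposal is a template, not a proof. Since you never fix the construction, no step can be checked. The construction is in fact fully specified by the figure's TikZ source. There are $k$ identical gadgets. Each gadget has arcs $s\to u$ and $s\to v$ of flow $15k+3$, arcs $u\to u'$ and $v\to v'$ of flow $13k+1$, crossing arcs $u'\to q$ and $v'\to p$ of flow $9k-3$, three internal terminals of demands $8k$, $6k-2$, $8k$, and arcs of flow $4k+4$ from $u'$ and $v'$ toward $k+1$ central terminals of demand $8k$. The heuristic you sketch (an $8k$ overload plus a compensating $\approx k$, giving ``$8k+k-3$'') is not how the bound arises.

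\textbf{The missing idea} is a global pigeonhole step followed by a local gadget lemma.
\begin{itemize}
\item \emph{Pigeonhole.} Assume $|x_a-y_a|<9k-3$ on every arc. No arc of flow $4k+4$ can carry two central terminals, since $16k-(4k+4)=12k-4\ge 9k-3$. There are only $2k$ such arcs, two per gadget, but $k+1$ central terminals. Hence some gadget routes central terminals through both its left side ($s\to u\to u'$) and its right side ($s\to v\to v'$).
\item \emph{Gadget lemma.} Each internal terminal then has exactly one left and one right $s$-path. By symmetry let the $6k-2$ terminal go left. The right $8k$ terminal cannot also go left, since $u\to u'$ would carry $22k-2$, exceeding $13k+1$ by $9k-3$. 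The arc $v'\to p$ of flow $9k-3$ must be used, or its deficit is $9k-3$. Only the left $8k$ terminal can still use it, so that terminal goes right. Then $s\to v$ carries $24k$ against $15k+3$.
\end{itemize}
Each contradiction is an exact discrepancy of $9k-3$. The $-3$ comes from the offsets in the arc values, not from integrality.

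\textbf{Other planned ingredients are off target.}
\begin{itemize}
\item There are no small commodities here: all demands are $8k$ or $6k-2$. In an unsplittable flow, nothing ``splits the imbalance between two arcs.'' Your proposed lemma that small commodities shift loads by less than $k$ is therefore vacuous.
\item A case split on ``the largest commodity'' is ill-posed, because $3k+1$ commodities share the maximum demand $8k$. The case analysis must be organized around the gadget selected by pigeonhole and its $6k-2$ terminal.
\item The hypothesis $k\ge 4$ does not exclude degenerate packings. It is exactly what gives $9k-3>8k=\max_j d_j$, i.e., what makes the construction a counterexample at all.
\item You still need the feasibility check on $x$. Per gadget, $2(15k+3)=(22k-2)+2(4k+4)$, and in total $k(8k+8)=8k(k+1)$ reaches the central terminals.
\end{itemize}
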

\begin{figure}[t]
\centering
\resizebox{\linewidth}{!}{%
\begin{tikzpicture}[
  scale=0.80,
  transform shape,
  flow/.style={
    draw=black!25,
    line width=0.55pt,
    ->,
    >=latex
  },
  selected flow/.style={
    flow,
    draw=black,
    line width=1.35pt
  },
  crossing/.style={
    flow,
    preaction={draw=white,line width=2.4pt,-}
  },
  selected crossing/.style={
    selected flow,
    preaction={draw=white,line width=3.4pt,-}
  },
  flow label/.style={
    fill=white,
    fill opacity=0.97,
    text opacity=1,
    inner xsep=1.2pt,
    inner ysep=0.7pt,
    text=black!75,
    font=\small
  },
  junction/.style={
    circle,
    fill=black,
    draw=black,
    inner sep=0pt,
    minimum size=1.7mm
  },
  terminal/.style={
    rectangle,
    draw=red!70!black,
    fill=red!8,
    text=red!70!black,
    line width=0.9pt,
    minimum width=8mm,
    minimum height=6mm,
    inner sep=0pt,
    font=\large
  },
  source/.style={
    circle,
    draw=green!50!black,
    fill=green!25,
    text=green!40!black,
    line width=1pt,
    minimum size=6.2mm,
    inner sep=0pt,
    font=\small
  }
]

\def\flowgadget#1#2#3#4{%
  \begin{scope}[shift={(#2,0)}]

    \node[junction] (u#1) at (-3, 3.10) {};
    \node[junction] (v#1) at ( 3, 3.10) {};
    \node[junction] (eu#1) at (-3,-1.15) {};
    \node[junction] (ev#1) at ( 3,-1.15) {};

    \node[terminal] (ta#1) at (-2,1.68) {$8k$};
    \node[terminal] (tc#1) at ( 0,1.68) {$\ 6k-2\ $};
    \node[terminal] (tb#1) at ( 2,1.68) {$8k$};

    \node[junction] (p#1) at (-1,0.05) {};
    \node[junction] (q#1) at ( 1,0.05) {};

    \draw[#3] (s) -- node[flow label,pos=0.61] {$15k+3$} (u#1);
    \draw[#3] (s) -- node[flow label,pos=0.61] {$15k+3$} (v#1);

    \draw[#3] (u#1) -- node[flow label,pos=0.72] {$13k+1$} (eu#1);
    \draw[#3] (v#1) -- node[flow label,pos=0.72] {$13k+1$} (ev#1);

    \draw[flow] (u#1) -- node[flow label,pos=0.43] {$2k+2$} (ta#1);
    \draw[#3]   (v#1) -- node[flow label,pos=0.43] {$2k+2$} (tb#1);

    \draw[#3]   (p#1) -- node[flow label,pos=0.65] {$6k-2$} (ta#1);
    \draw[flow] (q#1) -- node[flow label,pos=0.65] {$6k-2$} (tb#1);

    \draw[flow] (p#1) -- node[flow label,pos=0.4] {$3k-1$} (tc#1);
    \draw[#3]   (q#1) -- node[flow label,pos=0.4] {$3k-1$} (tc#1);

    \draw[#3] (eu#1) -- node[flow label,pos=0.25] {$9k-3$} (q#1);
    \draw[#4] (ev#1) -- node[flow label,pos=0.25] {$9k-3$} (p#1);

    \draw[#3] (eu#1) -- node[flow label,pos=0.43] {$4k+4$} (z);
    \draw[#3] (ev#1) -- node[flow label,pos=0.43] {$4k+4$} (z);
  \end{scope}%
}

\node[source] (s) at (0,5.35) {$s$};
\node[terminal] (z) at (0,-3.10) {$8k$};

\flowgadget{L}{-7.20}{selected flow}{selected crossing}
\flowgadget{R}{ 7.20}{flow}{crossing}

\node[font=\Large] at (0,1.15) {$\cdots$};

\node[terminal] (rL) at (-2,-3.10) {$8k$};
\node[terminal] (rR) at ( 2,-3.10) {$8k$};
\node[font=\Large] at ( 1,-3.1) {$\cdots$};
\node[font=\Large] at (-1,-3.1) {$\cdots$};
\node[font=\Large] at (0,-4)
  {$\underbrace{\qquad\qquad\qquad\qquad}_{k+1}$};

\end{tikzpicture}%
}
\caption{The fractional flow used in the counterexample to the Morell--Skutella conjecture. Arc labels indicate fractional flow values, and terminal labels indicate demands. Bold arcs in the first gadget show the paths forced to be chosen in the proof of Theorem \ref{thm:counterexample-MS}.}
\label{fig:unsplittable-flow-counterexample}
\end{figure}

\begin{proof}
In the construction presented in Figure~\ref{fig:unsplittable-flow-counterexample}, we have $k$ identical gadgets with three terminals each, and additionally $k+1$ terminals of demand $8k$ each, which we call \emph{central terminals}. Suppose for contradiction that there exists an unsplittale flow $y$ such that $|x_a-y_a|\le 9k-4$ for all $a\in A$. In an unsplittable flow, every terminal chooses one path from the source $s$, which we call an $s$-path. First note that each $s$-path to central terminals can only be used at most once, since otherwise an arc $a$ of fractional flow $x_a=4k+4$ would have $y_a-x_a\ge 16k - (4k+4) = 12k-4 > 9k-4$, a contradiction. 

Next, we claim that at most one central terminal can go through each gadget. This would lead to a contradiction as we have $k+1$ central terminals but only $k$ gadgets. Suppose for contradiction that on some gadget, both left and right $s$-paths to central terminals are used. Each of the three terminals inside the gadget has two $s$-paths to choose from; we say it is the left (right) $s$-path if it overlaps with the left (right) $s$-path to the central terminals. By symmetry, we may assume that the terminal of demand $6k-2$ inside the gadget is routed through the left $s$-path. If the terminal of demand $8k$ to its right is also routed through the left $s$-path, the left arc of fractional flow $x_a=13k+1$ would have $y_a=8k+8k+(6k-2) = 22k-2$. Then, $y_a-x_a=(22k-2)-(13k+1) = 9k-3$, contradicting our assumption. So the terminal of demand $8k$ to the right must use the right $s$-path. Further, the right arc of fractional flow $x_a=9k-3$ must be used at least once, as otherwise $x_a-y_a=x_a=9k-3$, a contradiction. It follows that the terminal of demand $8k$ on the left must use the right $s$-path. But then the right arc of fractional flow $x_a=15k+3$ has $y_a=8k+8k+8k=24k$, which makes $y_a-x_a=24k-(15k+3)=9k-3$, a contradiction.
\end{proof}

Now we describe a counterexample to Conjecture~\ref{conj:weighted-carpool}. Recall that a fractional assignment $x\in [0,1]^E$ is defined on the edges of a bipartite graph $G=(I\cup J, E)$ where $|I|=m$ and $|J|=n$. For convenience, in the following, a fractional assignment is given as a matrix $x\in [0,1]^{m\times n}$. The bipartite graph is the support of $x$, i.e., $(i,j)\in E$ if and only if $x_{ij}\neq 0$. Thus we require any (integral) assignment $y$ satisfies $y_{ij}=0$ if $x_{ij}=0$.

\begin{theorem}\label{thm:counterexample-carpool}
For any positive integer $k \ge 4$ there exists a fractional assignment $x \in [0,1]^{(3k+1) \times 6k}$ and weights $d \in \R^{6k}_{\ge 0}$, so that $\max_{j}d_j=10k+3$, and for every assignment $y \in \{0,1\}^{(3k+1) \times 6k}$,   
\[
    \max_{i,t}\left|\sum_{j=1}^t d_{j}y_{ij}-\sum_{j=1}^t d_{j}x_{ij}\right|\ge 11k=\frac{11k}{10k+3} \cdot \max_{j}d_j.
\]
\end{theorem}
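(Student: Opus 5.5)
I will build the counterexample by translating the Morell--Skutella gadget of Theorem~\ref{thm:counterexample-MS} into the carpooling (prefix-GAP) language. The translation uses the known reduction from weighted carpool to single-source unsplittable flow: machines correspond to ``chains'' of arcs, and a machine's prefix load up to job $t$ plays the role of the flow on an arc. The new structure replaces the arcs shared by several terminals.

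\textbf{Construction.} I will use $k$ gadgets plus one shared machine family that plays the role of the central terminals.

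\begin{itemize}[leftmargin=*]
\item \emph{Gadget jobs and machines.} Each gadget contributes $6$ jobs, ordered consecutively, which gives $6k$ jobs in total. It also contributes $3$ machines: a ``left'' machine, a ``right'' machine, and one auxiliary machine. Together with a single extra machine this accounts for the $3k+1$ machines.
\item \emph{Weights.} The demands will be of the form $10k+3$ (the maximum), $11k$-type values, and smaller values. They are tuned so that every relevant prefix load under $x$ equals a value like $x_a$ in the MS example.
\item \emph{Fractional assignment.} Each job is split between at most two machines. The split is chosen so that every gadget is symmetric under swapping left and right, mirroring Figure~\ref{fig:unsplittable-flow-counterexample}.
\item \emph{Pigeonhole ingredient.} I intend the extra machine, or equivalently the last jobs of each gadget, to encode the ``$k+1$ central terminals over $k$ gadgets'' counting.
\end{itemize}

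Concretely, I will first fix the gadget on paper by solving the small linear system that makes all prefix discrepancies equal $11k$ in the bad cases. I will then verify that the fractional assignment is feasible: the columns sum to $1$, and $\max_j d_j=10k+3$.

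\textbf{Proof of the lower bound.} Assume for contradiction that some $y$ has every prefix discrepancy at most $11k-1$. I will argue in three steps.

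\begin{enumerate}[leftmargin=*]
\item Certain heavy jobs cannot both go to the same machine, since together they would overshoot some prefix by at least $11k$. This is the analogue of ``each $s$-path to a central terminal is used at most once.''
\item Within a single gadget, both ``central'' slots cannot be used. I will mirror the MS case analysis. By symmetry the middle job goes left. Then the next heavy job must go right, or else the left prefix overshoots. Some prefix must then receive at least one job, or else it undershoots by $x$'s full value. This forces the remaining heavy job right, which overshoots the right machine's prefix.
\item Counting now gives the contradiction: there are $k+1$ central demands but only $k$ gadgets to absorb them, one each.
\end{enumerate}

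A key advantage of prefixes over arcs is that a two-sided bound on prefix sums of a machine controls both ``incoming'' and ``outgoing'' loads. That is exactly what the undershoot step (the $x_a-y_a$ case) needs.

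\textbf{Main obstacle.} The hardest part will be the construction itself. In carpooling, each job's weight is the same on every machine, and the only structure available is the order of jobs together with the prefix constraints. Simulating the DAG's shared arcs, such as $s\to u$ carrying three terminals, therefore requires arranging the job order so that one prefix of one machine sums exactly those terminals' contributions, while no extra prefixes give slack.

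My first attempt will place each gadget's jobs in the order: left heavy, middle, right heavy, then the two ``central'' jobs. Each central job will be shared between the gadget's left/right machine and the global extra machine. The prefix of the extra machine then counts how many central jobs it received, which forces one central job per gadget per machine side.

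If the constants do not close exactly at $11k$, I will re-tune the demands by solving for the values at which all three case-contradictions hit the same threshold. This is where the stated ratio $11k/(10k+3)$ should emerge.
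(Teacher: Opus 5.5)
Your architecture matches the paper's: $3$ machines per block plus one global machine $\star$, two jobs per block shared between $\star$ and a block machine, and a $k+1$ versus $k$ count. There is still a genuine gap, because the proposal contains no proof. The instance is never specified, and for an existence statement of this kind the explicit $x,d$, together with the check that every $y$ violates some prefix, is the entire content. You defer both (``I will first fix the gadget on paper'', ``I will re-tune the demands''). What you do sketch is inconsistent:
\begin{itemize}
\item the gadget order you list has five jobs, not six;
\item demands of ``$11k$-type'' are impossible, since $11k>10k+3=\max_j d_j$ for $k\ge 4$;
\item ``by symmetry the middle job goes left'' is unjustified. With a fixed job order, swapping left and right is not an automorphism once the two central jobs (one shared with each side) sit at different positions. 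The paper's gadget is in fact asymmetric.
\end{itemize}
Nor does the MS case analysis transfer verbatim. It is phrased in terms of shared arcs and $k+1$ central terminals, which have no counterpart in a carpool instance whose $6k$ jobs all live in gadgets.

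What is missing is the quantitative mechanism that creates the ``$k+1$ demands.'' In the paper, row $\star$ holds fractional weight $3b-a$ on each of the jobs $(i,3)$ and $(i,5)$, both of weight $a+2b$. If $\star$ receives at most one of them, its discrepancy drops by at least $2(3b-a)-(a+2b)=4b-3a$ over block $i$. After $k$ blocks this gives $D(\star,(k,6))\le -k(4b-3a)$. Equivalently, staying within the bound forces $\star$ to receive at least $k+1$ of its $2k$ candidate jobs. That is where your pigeonhole genuinely lives, not in ``one central job per gadget per side.''

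The block rows are supported only on their own block, so a local gadget analysis applies. If $\star$ takes both $(i,3)$ and $(i,5)$, a chain of forced choices through the prefixes $(i,2),\dots,(i,6)$ of rows $(i,1),(i,3),(i,2)$ produces some $|D|\ge 2a+b$. This chain uses the specific interleaving of the $\star$-jobs at positions $3$ and $5$, not the end placement you propose. The two linear conditions $4b-3a=11$ and $2a+b=11k$ give $a=4k-1$, $b=3k+2$ and $\max_j d_j=a+2b=10k+3$. Until you exhibit such a gadget and verify all of its assignments, the lower bound remains unproved.
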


\begin{proof}
The fractional assignment $x$ is given as a $3k+1$ by $6k$ matrix such that every column sum is $1$. An integral assignment needs to assign each column $c$ to a row $r$, where $x_{rc}\neq 0$. We label the $3k+1$ rows by $\star$ and $(i,j)$, $i \in [k], j \in [3]$, and the $6k$ columns by $(i,j)$, $i \in [k], j \in [6]$. The first index of a row or column is its block. Each column of $x$ has exactly two nonzero entries which add up to $1$, occurring either in row $\star$ or in the same block. Let $a = 4k-1$ and $b = 3k+2$. Below is the matrix with entries $d_{(i,j)} x_{(i, j), (i',j')}$. Thus, $d_{(i,j)}$ is the sum of the entries on column $(i,j)$, which is
\[
\begin{aligned}
    d_{(i,j)}=\begin{cases}
        a+b\quad &j=2\\
        a+2b\quad &o/w
    \end{cases}.
\end{aligned}
\]
In particular, $\max_{c} d_c=a+2b=10k+3$.
\[
\resizebox{\textwidth}{!}{$
\begin{bNiceArray}{cccccc|c|cccccc}[first-row,first-col]
 & (1,1)&(1,2)&(1,3)&(1,4)&(1,5)&(1,6)
 & \cdots
 & (k,1)&(k,2)&(k,3)&(k,4)&(k,5)&(k,6) \\

\star
 &0&0&3b-a&0&3b-a&0
 &\cdots
 &0&0&3b-a&0&3b-a&0 \\
\hline
(1,1)
 &2b-a&a&2a-b&a+b&0&0
 &\cdots
 &0&0&0&0&0&0 \\
(1,2)
 &2a&0&0&0&0&b
 &\cdots
 &0&0&0&0&0&0 \\
(1,3)
 &0&b&0&b&2a-b&a+b
 &\cdots
 &0&0&0&0&0&0 \\
\hline
\vdots \ \ \ \ \
 &\vdots&\vdots&\vdots&\vdots&\vdots&\vdots
 &\ddots
 &\vdots&\vdots&\vdots&\vdots&\vdots&\vdots \\
\hline
(k,1)
 &0&0&0&0&0&0
 &\cdots
 &2b-a&a&2a-b&a+b&0&0 \\
(k,2)
 &0&0&0&0&0&0
 &\cdots
 &2a&0&0&0&0&b \\
(k,3)
 &0&0&0&0&0&0
 &\cdots
 &0&b&0&b&2a-b&a+b
\end{bNiceArray}
$}
\]
Suppose by contradiction that there exists an assignment $y$ such that
\[
\left|\sum_{c=1}^t d_{c}y_{rc}-\sum_{c=1}^t d_{c}x_{rc}\right|< 11k\qquad \forall r, t.
\]
For row $r$ and column $t$, denote by $D(r,t)\coloneqq\sum_{c=1}^t d_{c}y_{rc}-\sum_{c=1}^t d_{c}x_{rc}$. We claim that, in each block $i \in [k]$, $y_{\star, (i,3)} + y_{\star, (i,5)} \le 1$. First let us show that this leads to a contradiction. Under this assumption, in each block $i \in [k]$, \[ \sum_{j=1}^6 d_{(i,j)} y_{\star, (i,j)} - \sum_{j=1}^6 d_{(i,j)} x_{\star, (i,j)} \le (a+2b) - 2(3b-a) = -4b+3a = -11.\]
Thus, after $k$ blocks, the discrepancy of row $\star$ has $D(\star, (k,6))\le -11k$, a contradiction.

It remains to argue that $y_{\star, (i,3)} + y_{\star, (i,5)} \le 1$. Suppose otherwise, i.e., $y_{\star, (i,3)} = y_{\star, (i,5)} = 1$. If $y_{(i,1), (i,1)} = y_{(i,1),(i,2)} = 1$, then row $(i,1)$ has 
$$D((i,1),(i,2))=(a+2b+a+b)-(2b-a + a) = 2a+b = 11k,$$
a contradiction. If $y_{(i,1), (i,1)} = y_{(i,1),(i,2)} = 0$, then row $(i,1)$ has 
$$D((i,1),(i,3))=0-(2b-a + a + 2a-b) = -11k,$$ which is again not possible. Thus $y_{(i,1), (i,1)} + y_{(i,1),(i,2)} = 1$. 

We now claim $y_{(i,1), (i,4)} = 1$. Otherwise, $y_{(i,3),(i,4)} = 1$. We discuss the following two cases. If $y_{(i,1),(i,1)} = 0$ and $y_{(i,1),(i,2)} =1$, then
$$D((i,1),(i,4))=(a+b) - (2b-a+a+2a-b+a+b)=-(2a+b) = -11k.$$
If $y_{(i,1),(i,1)} = 1$ and $y_{(i,1),(i,2)} =0$ then
$$D((i,3),(i,4))=(a+b+a+2b) - (b +b) = 2a+b = 11k.$$ Neither is possible, so indeed we must have $y_{(i,1), (i,4)} = 1$. 

Next, we claim $y_{(i,1),(i,1)} = 1$ and $y_{(i,1),(i,2)} = 0$, which in turn implies $y_{(i,3),(i,2)} = 1$. Suppose not, which means $y_{(i,1), (i,1)} = 0$, $y_{(i,1),(i,2)} = 1$, and $y_{(i,3),(i,2)} = 0$. Then,
$$D((i,3),(i,5))=0 - (b+b+2a-b) = -(2a+b)=-11k,$$ a contradiction. Thus indeed we must have $y_{(i,1),(i,1)} = 1$ and $y_{(i,3),(i,2)} = 1$. 

Finally, we look at the last column $(i,6)$ of block $i$. If $y_{(i,2),(i,6)} = 0$, then row $(i,2)$ is never selected and has $$D((i,2),(i,6))=0-(2a+b)=-11k.$$ Otherwise, $y_{(i,3),(i,6)} = 0$ and $$D((i,3),(i,6))=(a+b) - (b+b+2a-b+a+b) = -(2a+b)=-11k.$$ In either case, we reach a contradiction, which implies $y_{\star, (i,3)} + y_{\star, (i,5)} \le 1$, as needed. \end{proof}


\section{Conclusion}
We close with some open questions. \begin{itemize}
    \item Can we prove a $C=O(\sqrt{\log n})$ bound for prefix-constrained bases, matching the best known bound for prefix Beck--Fiala? 
    \item Is there a PTAS for matroid-constrained makespan minimization with identical machines, matching the PTAS for the unconstrained version \cite{HochbaumShmoys1987DualApproximation}?
    \item Can we prove the weighted equitability theorem for multiple constraints with column sparsity? Namely, let $M=([n],\mathcal{I})$ be a matroid whose ground set can be partitioned into $k$ bases, and let $W\in \R_{\ge 0}^{m\times n}$ be a matrix which contains at most $\Delta$ nonzeros per column. Is there a partition of $[n]$ into $k$ bases $[n]=B_1\sqcup\cdots\sqcup B_k$ such that for every row $w_i^\top$ of $W$ and $j\in [k]$,
    \[
    \left|w_i(B_j)-\frac{w_i([n])}{k}\right|\le f(\Delta)\cdot \max_{u,v} W_{u,v}
    \]
    for some function $f$ that only depends on $\Delta$. 
    A related special case where $W\in \{0,1\}^{m\times n}$ has pairwise disjoint row supports was conjectured in \cite{AkramiLiuRajVegh2026}, with a discrepancy bound allowed to depend on $m$, and remains open. Such matrices have column sparsity $1$.
    \item Can we prove the Morell--Skutella conjecture with a discrepancy bound $2\cdot \max_{j\in [n]}d_j$?
\end{itemize}

\section*{Acknowledgement} 
The authors used GPT-5.6 Sol, GPT-5.6 Pro and Opus~5, including as part of a custom harness, during the development of this work to explore proof strategies and counterexamples. The proof strategies for the weighted equitability theorem and algorithm, as well as the counterexamples to the weighted carpooling conjecture and Morell--Skutella conjecture, were first discovered by AI and then simplified by the authors. Every AI-generated proof was verified and rewritten by the authors, who take full responsibility for the paper. 

This work grew out of a collaboration with Tam\'as Schwarcz and L\'aszl\'o V\'egh, and we are grateful for their valuable input. Krist\'of is grateful to Nicole Megow for initial discussions on the matroid-constrained load balancing problem. Part of the work was completed during an internship of the second author at Microsoft Research and visits to University of Bonn and Eötvös Lor\'and University. The research received further support from the Lend\"ulet Programme of the Hungarian Academy of Sciences (LP2021-1/2021), from the Ministry of Innovation and Technology of Hungary from the National Research, Development and Innovation Fund (ADVANCED 150556 and 153096), and from the Dynasnet European Research Council Synergy project (ERC-2018-SYG 810115).

\bibliographystyle{plain} 
\bibliography{reference} 

\appendix
\section{Proof of Theorem \ref{thm:weighted_degree_bounded_basis}}\label{sec:proof-degree-bases}
\begin{proof}[Proof of Theorem \ref{thm:weighted_degree_bounded_basis}]
    Let $\widebar{M}=(\widebar{E},\widebar{\mathcal{I}})\coloneqq M$, let $\widebar{W}\in \R_{\ge 0}^{\widebar{m}\times\widebar{E}}\coloneqq W$ and let $\widebar{b}\coloneqq Wx$. During the algorithm, by a slight abuse of notation, $M=(E,\mathcal{I})$ denotes the current matroid, $W\in \R_{\ge 0}^{m\times E}$ denotes the current constraint matrix, and $b$ denotes the current right-hand side. Initially, $b=\widebar{b}$. 
Then, $x$ is feasible for the following LP.
    \begin{equation}\label{eq:degree_matroid}
        \begin{aligned}
            \min&\ \sum_{e\in E} c_ey_e\\
            s.t.\ 
            &y(S)\leq r(S)\quad &\forall S\subseteq E\\
            &y(E)=r(E)\\
            &Wy=b\\
            &0\leq y_e\leq 1\quad &\forall e\in E.
        \end{aligned}
    \end{equation}
    Consider the following iterative rounding algorithm.
    \begin{algorithm}[htbp]
\caption{\textsc{Degree-bounded Bases}}
\label{alg:degree-bases}
\begin{algorithmic}[1]
\Require A matroid \(M=(E,\cI)\), weights $W\in \R_{\ge 0}^{m\times E}$ such that there are at most $\Delta$ nonzeros per column, costs $c\in \R^E$, and $x\in P(M)$.
\Ensure A basis $B$ such that $\inftynorm{W\ind_B-Wx}\leq 2\Delta\cdot \max_{i\in[m],e\in E} W_{i,e}$ and $c(B)\le c^\top x$.

\State Initialize with \(B\gets \emptyset\), and $b\gets Wx$.
\While{\(B\) is not a basis}
\State Let $y^*$ be an optimal extreme point of \eqref{eq:degree_matroid}. 
\State For every $e\in E$ such that $y^*_e=0$, delete the $e$-th column from $W$, and update $M\gets M\backslash e$. \label{step:a}
\State For every $e\in E$ such that $y^*_e=1$, delete the $e$-th column $w_e$ from $W$, let $b\gets b-w_e$, and update matroid $M\gets M/e$. Let $B\gets B\cup \{e\}$\label{step:b}
\State For every row $w_i^\top$ with at most $2\Delta$ nonzeros, delete $w_i^\top$ from $W$.\label{step:c}
\EndWhile

\State \Return \(B\)
\end{algorithmic}
\end{algorithm}

We first verify the discrepancy guarantee. Fix a row $\widebar{w}_i^\top$. Consider the extreme point solution $y^*$ right before the $i$-th constraint is dropped (if it is never dropped, we pick $y^*$ from the last iteration). Suppose $M=(E,\mathcal{I})$ is the current matroid, $w_i$ is the current row, and $b_i$ is the current right-hand side. Then,
\[
|\widebar{w}_i(B)-\widebar{b}_i|=\left|w_i(B\cap E)-w_i^\top y^*\right|=\left|\sum_{e\in E}w_i(e)(\ind_B(e)-y^*(e))\right|\le \sum_{e\in E}w_i(e)\le 2\Delta\cdot\max_{e\in E}w_i(e),
\]
where the first equality follows from steps \ref{step:a}, \ref{step:b} and the fact that $w_i^\top y^*=b_i$; the last inequality follows from step \ref{step:c}.
    
    Next, we prove the cost guarantee. After deleting an element with $y^*_e=0$ or contracting an element with $y^*_e=1$, the restriction of $y^*$ remains feasible for the next residual LP. Deleting a row from constraint matrix $W$ only enlarges the feasible region. Therefore, the cost of the elements already added to $B$ plus the optimum value of the residual LP never increases. Since $x$ is feasible for the initial LP, it follows that $c(B)\leq c^\top x$.

    We are left to prove that the algorithm terminates. Assume for the sake of contradiction that none of the conditions in steps \ref{step:a}, \ref{step:b}, and \ref{step:c} holds. Hence, $0<y^*_e<1$ for every $e\in E$, and the row $w_i^\top$ has at least $2\Delta+1$ nonzero entries for every $i$. Let $\mathcal{T}=\{S\subseteq E\colon y^*(S)=r(S)\}$ be the collection of all tight sets. By a standard uncrossing argument, there is a chain $\mathcal{L}=\{S_1,\dots,S_\ell\}$ where $S_1\subsetneq\cdots\subsetneq S_\ell$ such that $\spa(\{\chi_S\colon S\in\mathcal{L}\})=\spa(\{\chi_S\colon S\in\mathcal{T}\})$. Set $S_0\coloneqq\emptyset$. For every $h\in[\ell]$, one has $0<y^*(S_h\setminus S_{h-1})=y^*(S_h)-y^*(S_{h-1})=r(S_h)-r(S_{h-1})\in\Z$. Hence, $y^*(S_h\setminus S_{h-1})\geq 1$. Further, since $y^*_e<1$ for every $e\in S_h\setminus S_{h-1}$, we have $|S_h\setminus S_{h-1}|\geq 2$. Therefore, $|\mathcal{L}|\leq |E|/2$. 
    
    Let $m$ be the number of rows for the current matrix $W$. Since every column of $W$ has at most $\Delta$ nonzero entries and every row contains at least $2\Delta+1$ nonzero entries, 
    \[
        (2\Delta+1)m\leq |E|\Delta.
    \]
    Therefore, $m<|E|/2$, and hence $|\mathcal{L}|+m< |E|$. Since none of the bounds $0\leq y^*_e\leq 1$ is tight, the tight constraints of \eqref{eq:degree_matroid} span a space of dimension at most $|\mathcal{L}|+m<|E|$. This contradicts the fact that $y^*$ is an extreme point.
\end{proof}

\section{Proof of Lemma \ref{lem:prefix-cost}}\label{sec:prefix-cost-proof}
Let $Q\subseteq \R^n$ be a polyhedron, let $Z\subseteq Q$, and let $R\subseteq \R^n$ be a convex body containing the origin. A \emph{$(Q,Z)$-rounding algorithm} with error body $R$ takes $x\in Q$ as input and returns $y\in Z$ such that $y\in x+R$. The algorithm is \emph{face-preserving} if $y$ lies on the minimal face of $Q$ containing $x$. Swamy, Traub, Koch, and Zenklusen \cite{swamy2026unsplittable} prove the following:

\begin{theorem}[Swamy, Traub, Koch, and Zenklusen]\label{thm:cost-preserving}
    Assume there is a face-preserving $(Q,Z)$-rounding algorithm with error body $R$. Then, given costs $c\in \R_{\ge 0}^n$, there is an algorithm that receives an input $x\in Q$ and returns a solution $y\in Z$ satisfying $y\in x+(R-R)$ and $c^\top y\le c^\top x$.
\end{theorem}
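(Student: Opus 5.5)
The plan is a variational argument. Among all points of $Q$ that are "$R$-close to $x$ in the reverse direction", I would pick a cheapest one and round it. Face-preservation should then force the rounded point to be no more expensive. Concretely, let
\[
K\coloneqq Q\cap (x-R),
\]
which is convex and contains $x$ because $0\in R$. Let $z^*\in K$ minimize $c^\top z$ over $K$, so in particular $c^\top z^*\le c^\top x$. Among all minimizers I would take $z^*$ in the relative interior of the optimal set. This ensures $z^*$ lies in the relative interior of the minimal face $F$ of $Q$ containing it, and $F$ is as small as the optimal set allows. Applying the face-preserving rounding algorithm to $z^*$ yields $y\in Z\cap F$ with $y-z^*\in R$. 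Hence $y\in x-R+R=x+(R-R)$, which is exactly the required error body.

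It remains to show $c^\top y\le c^\top z^*$, and this is the core step. Suppose instead that $c^\top y>c^\top z^*$. Since $y\in F$ and $z^*$ lies in the relative interior of $F$, the point
\[
z_\varepsilon\coloneqq z^*-\varepsilon(y-z^*)
\]
lies in $F\subseteq Q$ for small $\varepsilon>0$, and $c^\top z_\varepsilon<c^\top z^*$. If also $z_\varepsilon\in x-R$, this contradicts the minimality of $z^*$, and we conclude $c^\top y\le c^\top z^*\le c^\top x$. Since rounding is a finite procedure producing $y$ with the two properties, the resulting algorithm is: compute $z^*$, which is a convex minimization, and round it.

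The main obstacle is the membership $z_\varepsilon\in x-R$. We only know $z^*\in x-R$ and $y-z^*\in R$. Moving in the direction $-(y-z^*)$ stays inside $x-R$ for small $\varepsilon$ only if $z^*$ is not on the boundary of $x-R$ in a way that blocks that direction, and $R$ is not assumed symmetric. I would try two remedies. The first is to replace $K$ by the scaled family $Q\cap(x-\theta R)$ and take $\theta$ slightly below $1$, so that $z^*$ is interior to $x-R$ in the relevant directions. Since the error only needs to land in $R-R$, the slack between $\theta R$ and $R$ absorbs the perturbation; this needs a limiting or compactness argument, or a direct use of $R$ being a convex body.

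If that fails, I would switch to an iterative path argument. Round $x$ to $y_1$. If $y_1$ is too expensive, push $x$ away from $y_1$ until it hits a proper face, giving a point $x_2$ of lower face dimension with $x\in[y_1,x_2]$, and recurse. This writes $x$ as a convex combination of rounded points, one of which is cheap. Controlling the accumulated error, so that each rounded point stays in $x+(R-R)$, would then be the crux, and I would keep step sizes bounded by how far $x_i-x$ has moved inside $-R$.
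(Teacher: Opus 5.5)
The paper quotes this result from Swamy, Traub, Koch, and Zenklusen without proof, so your argument has to stand on its own. Your setup is sound. The set $K=Q\cap(x-R)$ contains $x$, a minimizer $z^*$ has $c^\top z^*\le c^\top x$, and rounding $z^*$ gives $y\in Z\cap F$ with $y\in x+(R-R)$. The gap is in the core step. You try to prove $c^\top y\le c^\top z^*$, but this is false in general, so no repair of the perturbation $z^*-\varepsilon(y-z^*)$ can succeed.

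Here is a counterexample. Take $Q=[0,1]^2$, $Z=\{0,1\}^2$ and $R=[-1,1]\times[-\tfrac12,\tfrac12]$. Use the rounding that keeps integral coordinates, sends a fractional $z_1$ to $1$, and sends a fractional $z_2$ to $0$ if $z_2<\tfrac12$ and to $1$ otherwise. It is face-preserving with error body $R$. For $x=(\tfrac12,1)$ and $c=(0,1)$ we get $K=[0,1]\times[\tfrac12,1]$. The point $z^*=(\tfrac12,\tfrac12)$ lies in the relative interior of the optimal set, yet it is rounded to $y=(1,1)$, with $c^\top y=1>\tfrac12=c^\top z^*$.

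Your first remedy does not help. Over $Q\cap(x-\theta R)$ with $\theta<1$, the minimizer has $z^*_2=1-\theta/2>\tfrac12$. Then every admissible rounding has $y_2=1$, so $c^\top y>c^\top z^*$ is forced. Your second remedy leaves the whole difficulty, the accumulation of error along the path, unresolved. Notice that in the example $c^\top y=c^\top x$: the comparison you need is against $x$, not against $z^*$.

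The missing idea is to perturb $z^*$ in the direction $x-y$ rather than $z^*-y$. Since $z^*\in\mathrm{relint}(F)$ and $y\in F$, we have $z^*+\delta(z^*-y)\in F$ for all small $\delta>0$. For small $\varepsilon>0$ set
\[
z_\varepsilon\coloneqq z^*+\varepsilon(x-y)=(1-\varepsilon)\Bigl(z^*+\tfrac{\varepsilon}{1-\varepsilon}(z^*-y)\Bigr)+\varepsilon x .
\]
This point lies in $Q$, being a convex combination of a point of $F$ and $x$. Moreover,
\[
x-z_\varepsilon=(1-\varepsilon)(x-z^*)+\varepsilon(y-z^*)\in R,
\]
as a convex combination of two points of $R$. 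Hence $z_\varepsilon\in K$. Minimality of $z^*$ then gives $c^\top z^*\le c^\top z_\varepsilon=c^\top z^*+\varepsilon\,c^\top(x-y)$, that is, $c^\top y\le c^\top x$, which is exactly what is needed. This closes your argument. Choosing $z^*$ in the relative interior of the optimal set is unnecessary, and the nonnegativity of $c$ is never used.
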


We are ready to prove Lemma \ref{lem:prefix-cost}.
\begin{proof}[Proof of Lemma \ref{lem:prefix-cost}]
    Let $Q$ be the matroid base polytope $P(M)$, and let $Z\subseteq Q$ be the collection of indicator vectors of bases of $M$. Define the error body
    \[
    R\coloneqq \left\{z\in \R^n: \left|\sum_{i=1}^t w_iz_i\right|\le C\cdot \max_{e\in [n]} w_e\quad\forall t\in [n]\right\}.
    \]
    Assume Conjecture \ref{conj:matroid_chain} is true. Given $x\in Q$, let $F\subset Q$ be the minimal face containing $x$. Then, $F$ is the base polytope of another matroid $N$ on the same ground set $[n]$ (see e.g. \cite{fujishige1984characterization}). Applying Conjecture \ref{conj:matroid_chain} to matroid $N$ and weights $w$, we obtain a basis $B$ of $N$ satisfying \eqref{eq:prefix-bases}. In particular, $B$ lies on the face $F$. Let $y$ be the indicator vector of $B$. Notice that \eqref{eq:prefix-bases} is equivalent to $y-x\in R$. Therefore, we obtain a face-preserving $(Q,Z)$-rounding algorithm with error body $R$. It follows from Theorem \ref{thm:cost-preserving} that there is an algorithm that returns a solution $y\in Z$ satisfying $y\in x+(R-R)$ and $c^\top y\le c^\top x$. Since $R-R=2R$ by the central symmetry of $R$, we have $y-x\in 2R$, which is equivalent to
    \[\left|\sum_{i=1}^tw_i\mathbf{1}\{i\in B\}-\sum_{i=1}^tw_ix_i\right|\leq 2C\cdot \max_{e\in [n]} w_e\quad \forall t\in [n].\qedhere\] 
\end{proof}

\section{Proof of Proposition \ref{prop:binary-weights}}\label{sec:proof-binary}
\begin{proof}[Proof of Proposition \ref{prop:binary-weights}]
    Let $b_t\coloneqq \sum_{i=1}^t w_ix_i$. The LP relaxation of min-cost prefix-constrained bases is the following:
    \begin{equation}\label{eq:prefix_matroid}
        \begin{aligned}
            \min&\ \sum_{e\in [n]} c_ey_e\\
            s.t.\ 
            &y(S)\leq r(S)\quad &\forall S\subseteq [n]\\
            &y([n])=r([n])\\
            \rounddown{b_t}&\le \sum_{i=1}^tw_iy_i\le \roundup{b_t}\quad&\forall t\in [n]\\
            &0\leq y_e\leq 1\quad &\forall e\in [n].
        \end{aligned}
    \end{equation}
    Let $y^*$ be an extreme point optimal to \eqref{eq:prefix_matroid}. We assume $0<y^*_e<1$ for every $e\in [n]$ by fixing the variables $y_e^*\in\{0,1\}$ and considering a smaller instance. Let $\mathcal{T}=\{S\subseteq [n]\colon y^*(S)=r(S)\}$ be the collection of all tight sets from the matroid constraints. By a standard uncrossing argument, there is a chain $\mathcal{L}=\{S_1,\dots,S_\ell\}$ where $S_1\subsetneq\cdots\subsetneq S_\ell$ such that $\spa(\{\chi_S\colon S\in\mathcal{L}\})=\spa(\{\chi_S\colon S\in\mathcal{T}\})$. Let $R_t\coloneqq \{i\in [t]:w_i=1\}$. Then, $\mathcal{R}=\{R_1,\ldots, R_n\}$ forms another chain, where $R_1\subseteq \cdots\subseteq R_n$. The tight sets from the prefix constraints are defined by incidence vectors of sets in $\cal{R}$. Therefore, the constraint matrix defining $y^*$ forms two chains, which is totally unimodular (see e.g. \cite{frank2011connections} Section 4.2.2). Therefore, $y^*\in \{0,1\}^n$. Let $B\coloneqq \{e\in [n]:y^*_e=1\}$. Then, $B$ is a basis of $M$ satisfying 
    \[\left|\sum_{i=1}^tw_i\mathbf{1}\{i\in B\}-\sum_{i=1}^tw_ix_i\right|\leq 1\quad \forall t\in [n]. \qedhere\]
\end{proof}

\section{Proof of Theorem \ref{thm:uniform}}\label{sec:proof-uniform}

We prove the prefix-constrained basis conjecture for uniform matroids and arbitrary nonnegative weights.
\begin{proof}[Proof of Theorem \ref{thm:uniform}]
Let $r$ be the rank of $M$. We define a sequence of vectors $x = x^{(0)}, x^{(1)}, \dots, x^{(n)} \in [0,1]^n$ so that for each $t \in [n]$, the following five invariants are preserved:

\begin{itemize}
\item $x_i^{(t)} = x_i^{(0)}$ for all $i > t$;
\item $x^{(t)}$ has at most one fractional coordinate in $[t]$;
\item If $x^{(t)}_i \in \{0,1\}$ for some $i \in [t]$ then $x^{(t')}_i = x^{(t)}_i$ for every $t' > t$;
\item $\sum_{i=1}^t w_i x_i^{(t)} = \sum_{i=1}^t w_i x_i^{(0)}$;
\item $\sum_{i=1}^n x_i^{(t)} \in [r,r+1)$.
\end{itemize}

Given this sequence, we define $B \coloneqq  \{i : x^{(n)}_i = 1\}$. By the second invariant, $x^{(n)}$ has at most one fractional coordinate. By the fifth invariant, $\sum_{i=1}^n x_i^{(n)}\in[r,r+1)$. Therefore exactly $r$ coordinates of $x^{(n)}$ are equal to $1$, and hence $|B|=r$. Since $M$ is the rank-$r$ uniform matroid, $B$ is a basis. First we show that this satisfies the desired bound for every prefix $t \in [n]$. If $x^{(t)}$ has no fractional coordinate in $[t]$, $x^{(n)}_i = x^{(t)}_i$ for all $i \in [t]$ by the third invariant, so $$\left|\sum_{i=1}^tw_i\mathbf{1}\{i\in B\}-\sum_{i=1}^tw_ix_i\right| = \left|\sum_{i=1}^tw_i x^{(t)}_i -\sum_{i=1}^tw_i x^{(0)}_i\right| = 0.$$
Otherwise, let $p \in [t]$ be the unique fractional coordinate of $x^{(t)}$ in $[t]$. Then 
\[
\left|\sum_{i=1}^t w_i\mathbf{1}\{i\in B\}-\sum_{i=1}^t w_i x_i\right|
= w_p\left|\mathbf{1}\{p\in B\}-x_p^{(t)}\right|
\leq w_p
\leq \max_{e\in[n]}w_e.
\]

It remains to construct the sequence. Suppose that $x^{(t-1)}$ has been constructed and set $x^{(t)} = x^{(t-1)}$. If $x^{(t)}$ contains at most one fractional coordinate, no change is needed. Otherwise, there are exactly two: $t$ and some $p < t$. It suffices to replace $x^{(t)}_p, x^{(t)}_t$ by one of the two endpoints of the line segment $\{(u,v) \in [0,1]^2 : w_p u + w_t v = w_p x^{(t)}_p + w_t x^{(t)}_t\}$. Assume the two endpoints are $z_i=(u_i,v_i),\ i=1,2$. Each $z_i$ has at most one fractional coordinate, and thus the second invariant is preserved. Next, we prove that one of them satisfies the fifth invariant. We claim that the difference of their coordinate sum
$|(u_1+v_1)-(u_2+v_2)|$ is at most $1$.
Indeed, since $w \ge 0$, $u_1-u_2$ and $v_1-v_2$ has opposite signs. Therefore,
\[
|(u_1+v_1)-(u_2+v_2)|=\Big||u_1-u_2|-|v_1-v_2|\Big|\le \max\Big\{|u_1-u_2|,|v_1-v_2|\Big\}\le 1.
\]
Since $(x_p^{(t)},x_t^{(t)})$ is a convex combination of $z_1,z_2$ and originally $\sum_{i=1}^n x_i^{(t)} \in [r, r+1)$, one of the $z_i$ keeps the sum in this interval, as needed.  \end{proof}

\end{document}